\numberwithin{equation}{section}
\newcommand{\nocontentsline}[3]{}
\newcommand{\tocless}[2]{\bgroup\let\addcontentsline=\nocontentsline#1{#2}\egroup}
\newtheorem{theorem}{Theorem}
\newtheorem{corollary}[theorem]{Corollary}
\theoremstyle{definition}
\theoremstyle{remark}
\newtheorem{remark}{Remark}[section]
\newcommand{\bs}[1]{\boldsymbol{#1}}
\newcommand{\wh}[1]{\widehat{#1}}
\newcommand{\wt}[1]{\widetilde{#1}}
\newcommand{\bb}[1]{\mathbb{#1}}
\def\p{{\partial}}
\begin{document}
    \title{Coupling of waves to sea surface currents via horizontal density gradients }
	\author{Darryl D. Holm, Ruiao Hu and Oliver D. Street\footnote{Corresponding author, email: o.street18@imperial.ac.uk}\\
	d.holm@imperial.ac.uk, ruiao.hu15@imperial.ac.uk, o.street18@imperial.ac.uk\\
	Department of Mathematics, Imperial College London \\ SW7 2AZ, London, UK}
	\date{Key words: nonlinear water waves, free surface fluid dynamics, geometric mechanics}
	
	\maketitle

	\begin{abstract}
	    The mathematical models and numerical simulations reported here are motivated by satellite observations of horizontal gradients of sea surface temperature and salinity that are closely coordinated with the slowly varying envelope of the rapidly oscillating waves. This coordination of gradients of fluid material properties with wave envelopes tends to occur when strong horizontal buoyancy gradients are present. The nonlinear models of this coordinated movement presented here may provide future opportunities for the optimal design of satellite imagery that could simultaneously capture the dynamics of both waves and currents directly.
	    
	    The model derived here appears in two levels of approximation: first for rapidly oscillating waves, and then for their slowly varying envelope (SVE) approximation obtained by using the WKB approach. The WKB wave-current-buoyancy interaction model derived here for a free surface with significant horizontal buoyancy gradients indicates that the mechanism for the emergence of these correlations is the ponderomotive force of the slowly varying envelope of rapidly oscillating waves acting on the surface currents via the horizontal buoyancy gradient. In this model, the buoyancy gradient appears explicitly in the WKB wave momentum, which in turn generates density-weighted potential vorticity whenever the buoyancy gradient is not aligned with the wave-envelope gradient.  
	\end{abstract}
	


\section{Introduction}

\subsection{Submesoscale sea surface dynamics}

Capabilities in sea surface observation {have been improving rapidly during the past two decades \cite{SatOprog-A2022}. 
In particular,} new high-resolution satellite observation capabilities are revealing sea surface features {seen for the first time} at \emph{submesoscale} spatial scales of 100 m – 10 km and time scales of hours to weeks. Invariably, the new satellite imagery reveals a plethora of coupled dynamical {surface} phenomena, including currents, spiral filaments, flotsam patterns, jets and fronts, some of which are detected indirectly through gradients of sea surface temperature, salinity or colour, in addition to the imagery \cite{Chapron2020,F-K_etal2022,Gula_etal2022,Rascle2017,Yurovskaya2018}.    

{The new capabilities in sea surface observation are still developing.} For example, the impending Surface Water Ocean Topography (SWOT) mission will map the ocean surface mesoscale sea surface height field, as well as a large fraction of the associated submesoscale field, including buoyancy fronts \cite{Morrow_etal2019}. A sample of this type of submesoscale data taken from \cite{Chapron2020} is shown in figures \ref{fig:snapshot Chapron} and \ref{fig:coherence}.

The coming new age of higher-resolution upper ocean observations will present a formidable array of challenges for the next generation in data management, computational simulation and mathematical modelling.
{This paper will offer a mathematical modelling framework that is flexible enough to admit} uncertainty quantification through stochastic modelling and analysis, applied in concert with high-resolution observations, computational simulations, and stochastic data assimilation for large data sets. { This framework involves decomposing the surface motion into a two-dimensional horizontal flow map representing transport by the current acting on a one-dimensional vertical flow map representing wave-like motion of the elevation. This composition-of-maps modelling framework is described and applied to model sea-surface dynamics in two deterministic examples in section \ref{CoM-approach} of the present paper.}



\begin{figure}[h!]
	\centering
	\includegraphics[width=0.45\textwidth, height=0.45\textwidth]{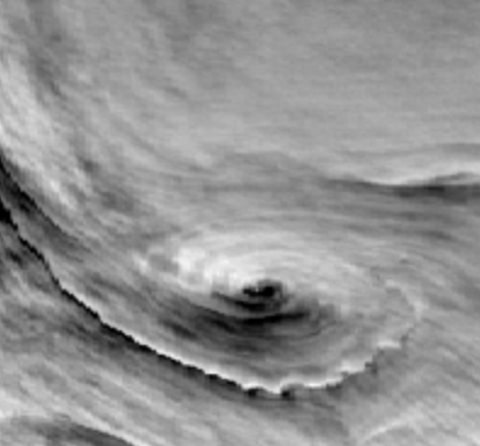}
	\caption{Wave activity in the submesoscale ocean is dynamically complex, as illustrated in this figure showing the zoomed image of a submesoscale sea surface elevation, seen in Envisar MERIS glitter observations. This image shows the wave elevation tracking a cyclonic eddy visible in the sea surface glitter observations. The pixel resolution is 250m. This glitter image demonstrates the complex, highly-coordinated dynamical forms taken in wave-current interaction on the submesoscale sea surface. In particular, notice the instabilities developing in the eddy's outer boundary. Image courtesy of B. Chapron.}
	\label{fig:snapshot Chapron}
\end{figure}
\begin{figure}[h!]
\centering
    \begin{subfigure}[b]{.7\textwidth}
      \centering
      \includegraphics[width=\textwidth]{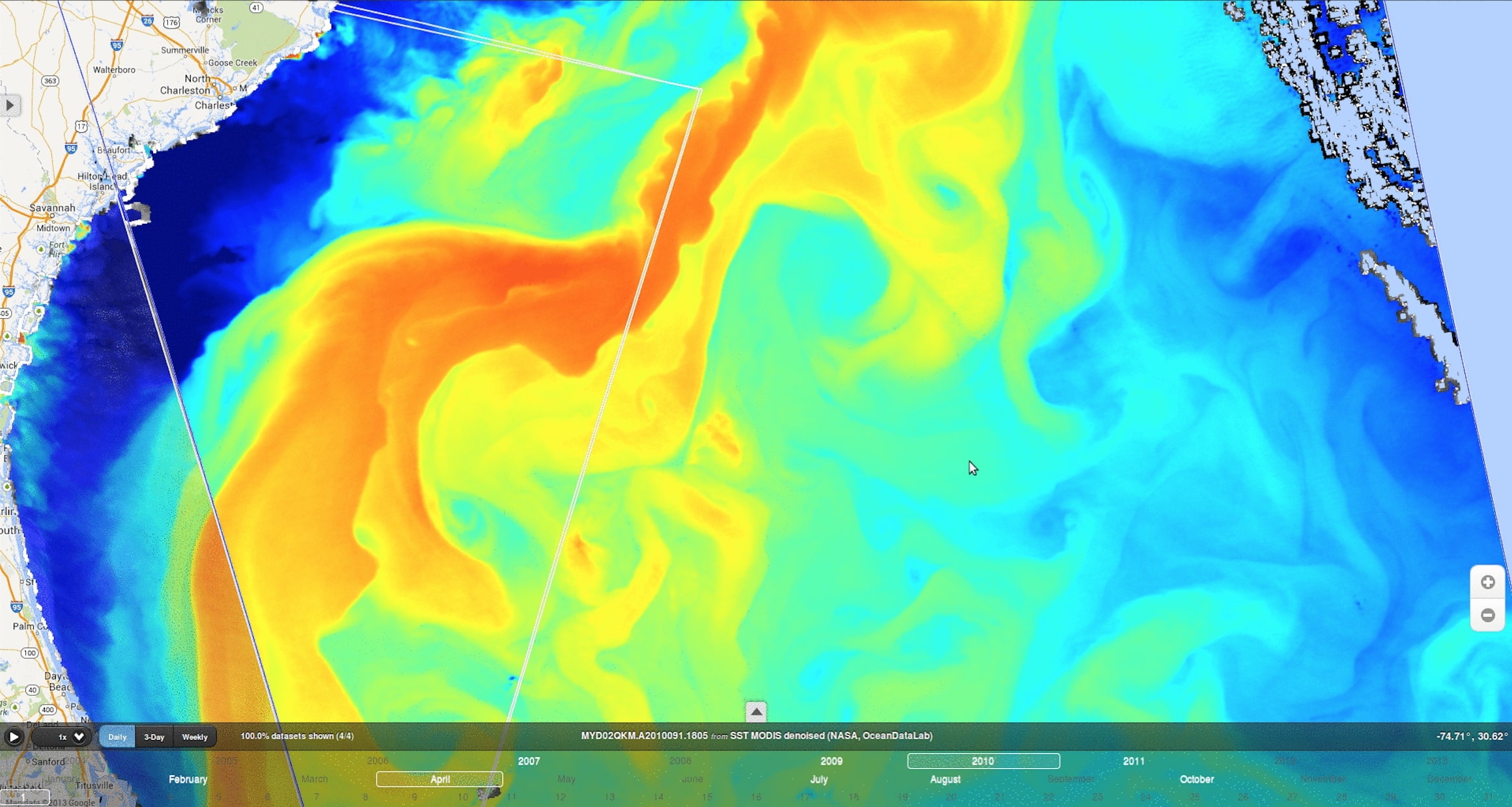}%
      \caption
        {%
          Sea surface temperature near the Gulf Stream, on April 1st 2010, from the Envisat AATSR measurements.%
          \label{fig:SST}%
        }%
    \end{subfigure}
    \begin{subfigure}[b]{.7\textwidth}
      \centering
      \includegraphics[width=\textwidth]{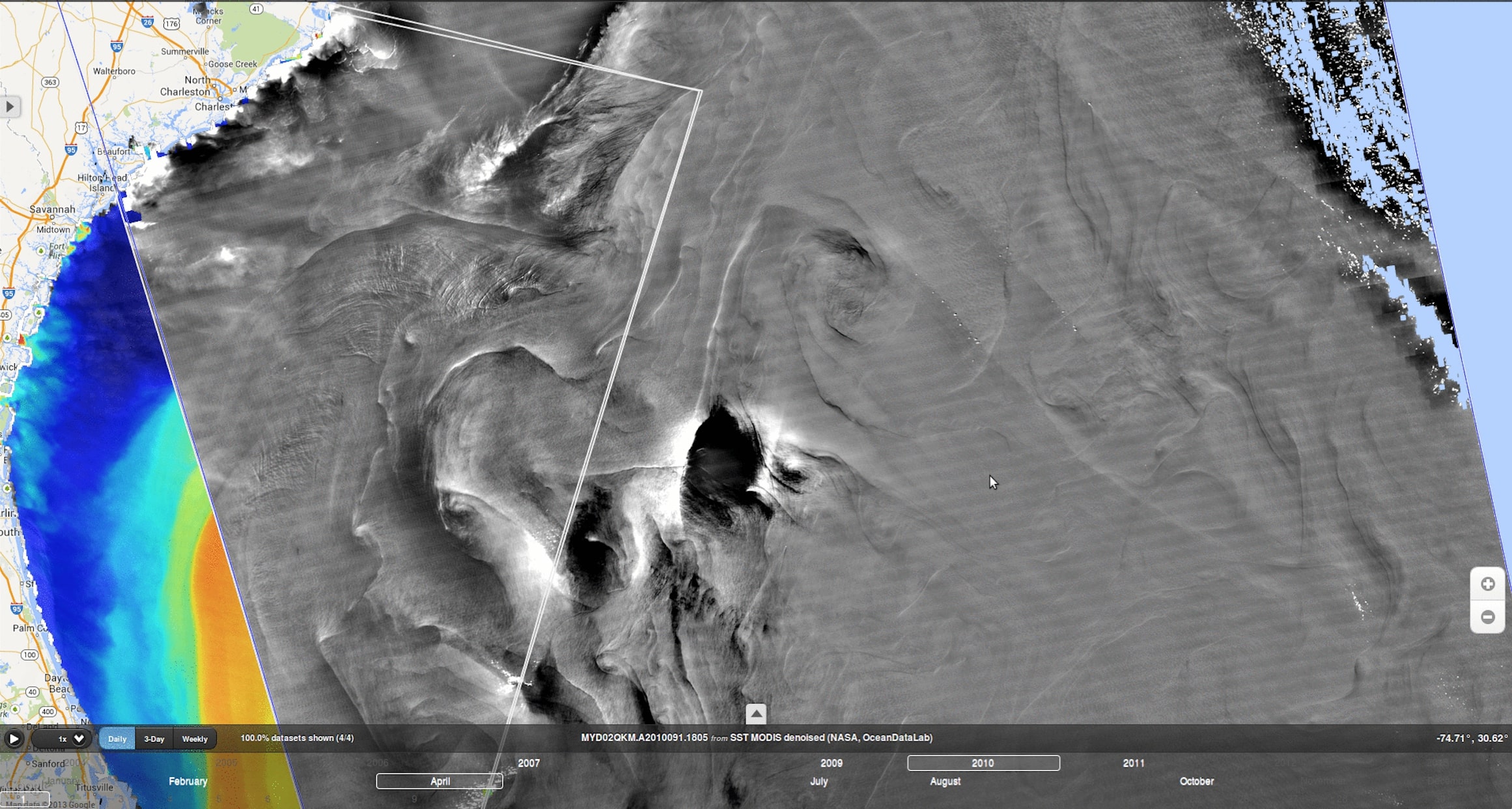}%
      \caption
        {%
          Sea surface glitter contrasts near the Gulf Stream, on April 1st 2010, from the Envisar MERIS observations.%
          \label{fig:glitter}%
        }%
    \end{subfigure}
\caption{Comparison of the two images above demonstrates the emergent coherence between sea surface temperature and the glitter patterns visible from satellite imagery. The thermal fronts visible are dynamic, and sea surface roughness is most obvious along the strongest fronts. Discussions of the interpretation of sun glitter measurements are given in \cite{Chapron2020,Rascle2017,Yurovskaya2018}. Images courtesy of B. Chapron.
}
\label{fig:coherence}
\end{figure}


\textbf{Emergent coherence (EC).} Combining high-resolution thermal data (buoyancy) with glitter data for the wave elevation as in figure \ref{fig:coherence} has recently revealed yet another interesting feature of submesoscale dynamics. Namely, the observed submesoscale data show extremely high correlations of wave, current and thermal properties \cite{Chapron2020}. This emergent spatial-temporal coherence of dynamic and thermal properties presents a significant challenge for dynamical submesoscale modelling. Accepting this challenge, the aim of this paper is to derive a mathematical model of nonlinear sea surface dynamics whose solutions also demonstrate the emergent coherence observed in combining different types of submesoscale data. This paper derives new {\emph{two-dimensional}} equations that show the emergent coherence (EC) seen {in the sea surface features appearing} in figure \ref{fig:coherence}. The EC behaviour produced by the equations {derived here} are demonstrated in figure \ref{fig:snapshot wcifs} which shows a snapshot of the coherence of buoyancy and wave amplitude distributions in the dynamics of divergence-free {two-dimensional} flow acting on free surface {vertical elevation wave} features moving under gravity. In the model equations, the horizontal buoyancy gradients mediate the interactions between the {vertical elevation} waves and {the horizontal} currents. {The equations of motion represent the current as a time-dependent, area-preserving map of the horizontal plane into itself and the waves as the composition of the horizontal flow map with a time-dependent vertical elevation map. Thus, the model involves a dynamical composition of maps (C$\circ$M).
}
\begin{figure}[h!]
	\centering
	    \begin{subfigure}[b]{0.45\textwidth}
			\centering
			\includegraphics[width=\textwidth, height=\textwidth]{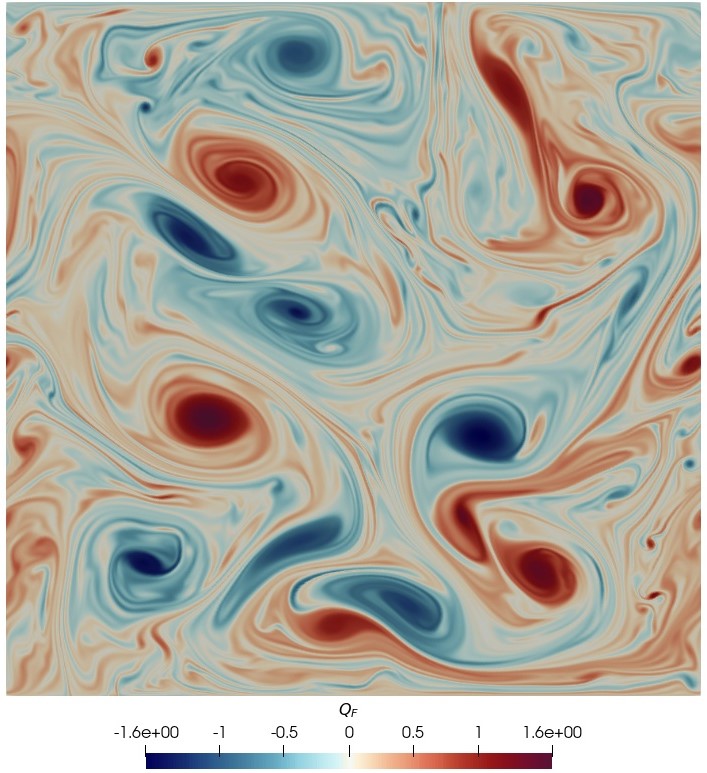}
		\end{subfigure}
		\begin{subfigure}[b]{0.45\textwidth}
			\centering
			\includegraphics[width=\textwidth, height=\textwidth]{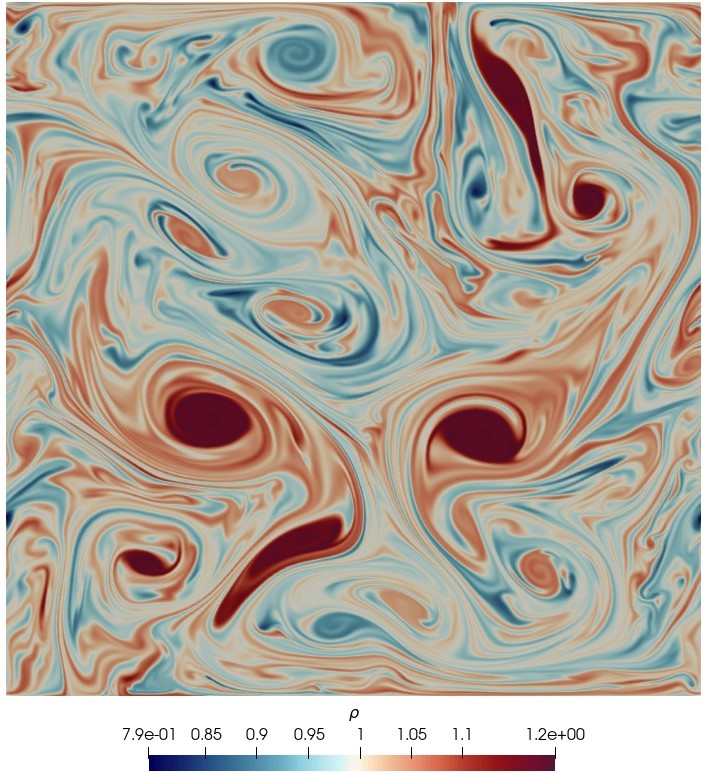}
		\end{subfigure}
		\begin{subfigure}[b]{0.45\textwidth}
			\centering
			\includegraphics[width=\textwidth, height=\textwidth]{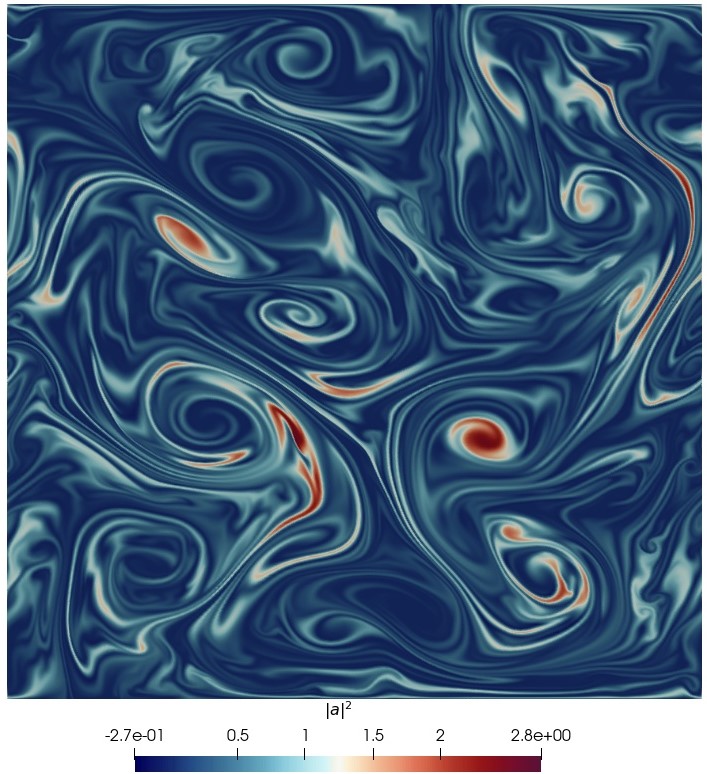}
		\end{subfigure}
		\begin{subfigure}[b]{0.45\textwidth}
			\centering
			\includegraphics[width=\textwidth, height=\textwidth]{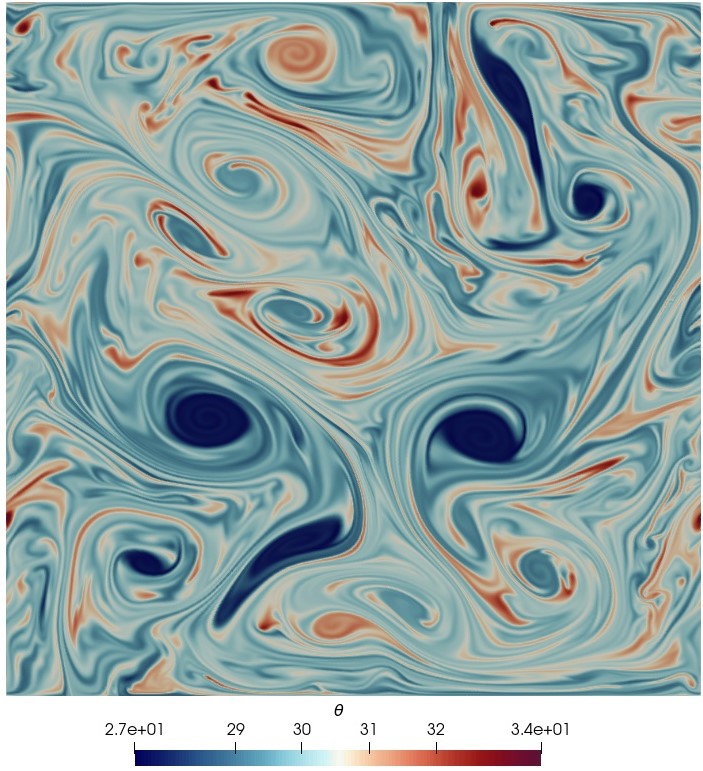}
		\end{subfigure}
		\caption{This is a $512^2$ snapshot of the C$\circ$M equations in the SVE approximation in the potential vorticity form in \eqref{Q-PV-SVE-eqn}. The four panels display the following distributions, modified potential vorticity Q-PV in \eqref{PV-def-SVE} (top left), buoyancy (top right), square of the wave amplitude (bottom left) and wave phase (bottom right) in the numerical simulation of the dynamics of divergence-free flow on a free surface moving under gravity. The simulation began with a spin-up period with zero wave amplitude. After the spin-up period, as explained in section \ref{SimSpecs}, a checker-board pattern of finite wave amplitude with \emph{zero phase} was introduced and the simulation was resumed. The `mixing' of these wave patterns eventually brought them into coherence with the spatial distributions of thermal properties and potential vorticity. These features show an emergent coherence in patterns similar to those seen in the corresponding high-resolution satellite data in figure \ref{fig:coherence}. 
		}
	\label{fig:snapshot wcifs}
\end{figure}

\begin{figure}[h!]
	\centering
	\begin{subfigure}[b]{0.45\textwidth}
		\centering
		\includegraphics[width=\textwidth, height=\textwidth]{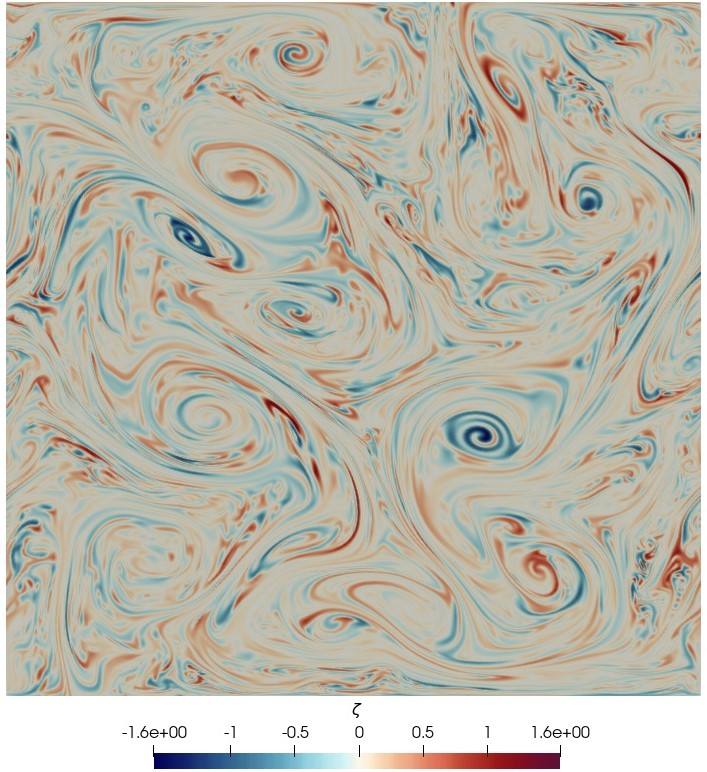}
	\end{subfigure}
	\begin{subfigure}[b]{0.45\textwidth}
		\centering
		\includegraphics[width=\textwidth, height=\textwidth]{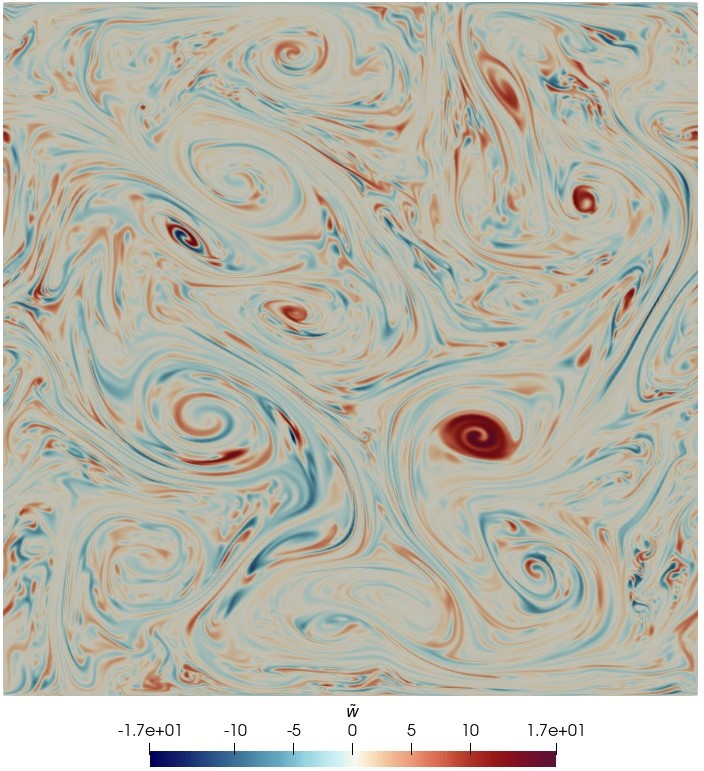}
	\end{subfigure}
	\caption{These $512^2$ snapshots of the C$\circ$M simulation in the vorticity form \eqref{PV-sys-Bdyn} shows the elevation $\zeta$ in the left panel and the density-weighted vertical velocity $\wt{w}$ on the right. The snapshots are taken at the same time and with the same fluid spin-up initial conditions as the snapshots of the simulation of the SVE approximate equations presented in Figure \ref{fig:snapshot wcifs}. Overlaying the two figures demonstrates that the resolved features in the $\zeta$ distribution in this figure of C$\circ$M results are bounded by the SVE wave envelope distribution $|a|^2$ in Figure \ref{fig:snapshot wcifs}.}
	\label{fig:snapshot full wcifs}
\end{figure}


\section[Submesoscale thermal wave-current dynamics]{Submesoscale thermal wave-current dynamics on a free surface}\label{CoM-approach}

{
\subsection{Surface waves as symmetry-breaking features of local force imbalances}
Waves are propagating symmetry-breaking features that signify the response to a local imbalance of forces. 
Thus, from the viewpoint of satellite oceanography, observations of waves -- defined as propagating sea surface elevation features -- signify processes at the surface or below the surface whose presence introduces forces that locally break the symmetry of the surface. The sea surface would otherwise follow the stable global gravitational balance of the geoid, which we regard here as being spherical. Thus, waves arise from a spatially local imbalance of forces in the neighbourhood of a stable equilibrium. The propagating feature of relevance here is the wave elevation, measured as the local departure of the surface level in the direction normal to its equilibrium mean level. The symmetry broken here is the invariance of the sea surface under spatial translations tangent to the equilibrium surface level, also known as the local \emph{horizontal} direction. Hence, from the viewpoint of satellite oceanography, sea surface waves are observed as local vertical displacements of the otherwise horizontal motion of the ocean currents on the sea surface.  From the mathematical modelling viewpoint, sea surface waves are local vertical oscillations of the horizontal surface that are carried along by the horizontal current flow, envisaged as a smooth invertible time-dependent map of the horizontal surface into itself. This is the composition of maps (C$\circ$M) modelling approach for describing the dynamics of horizontal fluid flows (currents) acting on oscillating vertical elevations (waves). Since the surface current velocity, its advected material properties and the wave elevation are all that can be observed in satellite oceanography, the task in three-dimensional ocean modelling for satellite oceanography devolves into determining the dynamical surface features that are produced by the three-dimensional flow processes below the surface arising from e.g., bathymetry, stratification, rotation, Langmuir circulation, and thermal effects such as frontogenesis. The dynamics of the surface signatures of these three-dimensional flow processes, as well as effects of air-sea interactions on the surface need to be interpreted, in order to interpret what satellite oceanography observes. 
}

\subsection{A tale of two maps: currents and waves}\label{sec: CoM intro}

\paragraph{Story line.}
Waves on the surface of the ocean are modelled here as a composition of two smooth invertible maps describing the temporal evolution and advection of two degrees of dynamical freedom interacting at widely separated space-time scales. {In this composition of maps (C$\circ$M) approach,} the waves are regarded as local vertical disturbances that rapidly oscillate as they are swept along by the broad, slowly changing horizontal currents. Thus, the slow current motion is a Lagrangian coordinate for the rapid wave oscillations. This wide separation in space-time scales invokes the classical WKB description. The standard WKB approach seeks a rapidly oscillating wave packet solution whose phase-averaged amplitude possesses a slowly varying envelope (SVE) spatially.  The WKB method is often applied via a variational principle because in a variational setting the phase average naturally leads to an adiabatic invariant known as the wave action density, cf. for example, \cite{Buhler} for a review of the WKB or SVE method in fluid dynamics. Here we will follow the variational approach of \cite{BurbyRuiz2020,GH1996} guided by the classical work of \cite{Voronovich1976,Whitham1967,Whitham2011}.

\paragraph{Submesoscale sea-surface motion: Composition of two time-dependent maps.}
The position and velocity of fluid parcels in motion under gravity on a 2D free surface embedded in $\bb{R}^3$ have both horizontal and vertical components. The corresponding flow maps are denoted as the map $\phi_t :\bb{R}^2\to \bb{R}^2$ for the horizontal current flow,  and  as the composite map $\zeta_t\phi_t $ for the vertical elevation of the waves as a function of time and position in $\bb{R}^2$. The flow lines of these two components of the flow map of a free surface can be written as 
\begin{align*}
\bs{r}_t=\phi_t \bs{r}_0
\quad\hbox{and}\quad
z_t=\zeta_t(\phi_t\bs{r}_0)=:\zeta_t(\bs{r}_t)
\,,\end{align*}
where $\bs{r}_t=(x_t,y_t)\in \bb{R}^2$ is the horizontal position along the flow at time $t$ and $\zeta_t(\bs{r}_t)$ is the vertical elevation at horizontal position $\bs{r}_t$ at time $t$, starting at position $\bs{r}_0$ at time $t=0$. Thus, one may say that the initial position of the flow line, $\bs{r}_0$, is a Lagrangian coordinate for the horizontal motion, and  the horizontal motion is a Lagrangian coordinate for the vertical motion. That is, the `footpoint' at time $t$ of the vertical component of the flow map $\zeta_t$ is located in the horizontal plane along a curve $\phi_t \bs{r}_0$ parameterised by time $t$. Likewise, one can simply say that the wave dynamics is advected, or swept along, by the current dynamics. 

Hence, the corresponding horizontal and vertical components of velocity along a stream line $\bs{r}_t$ in the horizontal plane are defined by, 
\begin{align*}
\frac{d\bs{r}_t}{dt} &= \frac{d}{dt}(\phi_t \bs{r}_0) = \bs{\wh{v}}_t(\phi_t \bs{r}_0) =:  \bs{\wh{v}}_t(\bs{r}_t)
\,,\quad\hbox{so}\quad
\bs{\wh{v}}_t = \frac{d\phi_t}{dt}\phi_t ^{-1}
\quad\hbox{and}\quad
\\
&
\\
\frac{dz_t}{dt}&=:\wh{w}_t(\bs{r}_t)=\frac{d}{dt}\big(\zeta_t(\phi_t\bs{r}_0)\big)
= \partial_t \zeta_t(\bs{r}_t) + \nabla_{\bs{r}} \zeta_t(\bs{r}_t)\cdot  \bs{\wh{v}}_t(\bs{r}_t)
\,.\end{align*}
That is, in the dynamics of free surface flow, the vertical velocity $\wh{w}(\bs{r},t)$ at a given Eulerian point $\bs{r}$ and time $t$ is related to the wave elevation $\zeta(\bs{r},t)$ and horizontal velocity $\bs{\wh{v}}(\bs{r},t)$ at that point by
\begin{align*}
\wh{w}(\bs{r},t) = \partial_t\zeta(\bs{r},t) + \bs{\wh{v}}(\bs{r},t)\cdot\nabla_{\bs{r}}\zeta(\bs{r},t)
\,.\end{align*}
In terms of these fluid variables, {one could propose a} Hamilton's principle for wave-current interaction of a free surface by following \cite{CHS2021} for the variational modelling framework and {applying} \cite{Whitham1967,Craig2016} for the potential energy to find\footnote{In \cite{CHS2021} the potential energy was linear in $\zeta$. This linearity neglected the restoring force due to vertical pressure gradient via Archimedes' principle. Adopting the potential energy quadratic in $\zeta$ regains this restoring force.}
\begin{align}
\begin{split}
0 = \delta S &= \delta\int_a^b \ell(\bs{\wh{v}},\zeta,D,\rho)\,dt
\\&= \delta\int_a^b \int_{\cal D}\bigg( \frac{1}{2}\Big( |\bs{\wh{v}}|^2 
+ \sigma^2\big(\partial_t \zeta + \nabla_{\bs{r}} \zeta \cdot  \bs{\wh{v}}\big)^2 \Big) 
- \frac{\zeta^2}{2Fr^2} 
\bigg) D\rho 
- p(D-1)\,d^2r\,dt
\,.
\end{split}
\label{HP-WCI-A}
\end{align}

To interpret the variational principle proposed in \eqref{HP-WCI-A} we rewrite its Lagrangian as a sum of an Eulerian spatial integral and an integral over material mass elements $d^2r_0=D\rho\, d^2r$ which follow the paths of the horizontal fluid motion, $\bs{r}(\bs{r}_0,t)=\phi_t \bs{r}_0$,
\begin{align}
0 = \delta S &= \delta\int_a^b \int_{\cal D} \frac{D\rho}{2} |\bs{\wh{v}}|^2 
- p(D-1)\,d^2r\,dt
+ 
\delta\int_a^b \int_{{\cal D}_0}
\frac{\sigma^2}{2}\dot{\zeta}^2 - \frac{\zeta^2}{2Fr^2} 
\, d^2r_0\,dt
\,.
\label{HP-WCI-A-Eul-Lag}
\end{align}
Variations of the first summand in \eqref{HP-WCI-A-Eul-Lag} at fixed spatial position $(\bs{r})$ yield the Euler fluid equations for 2D divergence free flow with advected buoyancy, $\rho(\bs{r},t)=\rho(\phi_t\bs{r}_0)=\rho_0(\bs{r}_0)$, 
\begin{align}
\partial_t \bs{\wh{v}} + (\bs{\wh{v}}\cdot\nabla_{\bs{r}})  \bs{\wh{v}} 
= -\,\frac{1}{\rho}\nabla_{\bs{r}} p
\quad\hbox{with}\quad
\nabla_{\bs{r}}\cdot\bs{\wh{v}} =0
\,.\label{Eul-eqn}
\end{align}
Variations of the second summand in \eqref{HP-WCI-A-Eul-Lag} taken at fixed mass element $(\bs{r}_0)$ yield equations for vertical harmonic oscillations of the elevation of each material mass element
\begin{align}
\sigma^2\ddot{\zeta}(\bs{r}_0,t) = \sigma^2\frac{d^2 \zeta}{dt^2}\Big|_{\bs{r}_0} 
= -\, \frac{\zeta(\bs{r}_0,t)}{Fr^2}
\,.\label{Osc-eqn1}
\end{align}
The wave-elevation equation in \eqref{Osc-eqn1} is unrealistic, though, because it implies that fluid mass elements with different labels $(\bs{r}_0)$ would be oscillating in phase and all with the same frequency, as they follow the flow of the Euler fluid equations \eqref{Eul-eqn} for 2D divergence free flow with advected buoyancy. This unrealistic synchronisation and resonance can be removed by including the inertia of each mass element. This can done by including the initial buoyancy of each mass element, as 
\begin{align}
\sigma^2\ddot{\zeta}(\bs{r}_0,t) = \sigma^2\frac{d^2 \zeta}{dt^2}\Big|_{\bs{r}_0} 
= -\,\frac{\rho_{ref}}{\rho_0(\bs{r}_0)} \frac{\zeta(\bs{r}_0,t)}{Fr^2}
\,.\label{Osc-eqn2}
\end{align}
At this point in our reasoning, we have not yet considered the differences in space and time scales between the fluid flow and the wave activity. In what follows, we will use the simple composition-of-maps idea explained here along with estimates of relative space and time scales to investigate the applicability of this class of models. To improve the applicability of the model comprising \eqref{Eul-eqn} and \eqref{Osc-eqn2} for describing the effects of currents on waves, we will derive a related model  in the slowly varying envelope (SVE) approximation. The SVE approximation allows considerations of current and wave dynamics at the same space and time scales. 

The comparisons of the simulated solutions of these C$\circ$M models with the observations in figures \ref{fig:snapshot Chapron}-\ref{fig:snapshot full wcifs} above indicate that these models can indeed produce results that match some aspects of observed features. However, these models are not derived from three dimensional fluid equations. Instead, they are derived from the simple solution ansatz in Hamilton's principle that the vertical elevation of the sea surface wave activity is carried by divergence-free horizontal fluid motion.  The latter assumption is a weakness of the current approach, because it precludes effects of vertical up-welling and down-welling, which are observed to occur along with convergence and divergence of currents \cite{Baylor}. The equations derived here are also not associated with classical surface wave equations such as the nonlinear Schroedinger (NLS) equation, or other celebrated surface wave equations. This departure from the classical water wave literature may be regarded as another weakness of the current approach. 
\color{black}
\paragraph{Estimating parameters $\sigma^2$ and $Fr^2$ for satellite observations.}
The Lagrangian $\ell(\bs{\wh{v}},\zeta,D,\rho)$ in \eqref{HP-WCI-A} represents the dimension-free difference of the kinetic and potential energies, augmented by the incompressibility constraint imposed by the Lagrange multiplier $p$. Two dimension-free parameters ($\sigma^2$ and $Fr^2$) appear in this Hamilton's principle. 
The coefficient $\sigma^2=([H]/[L])^2$ in formula \eqref{HP-WCI-A} is the square of the vertical-to-horizontal aspect ratio. Typically, for satellite observations of submesoscale dynamics one finds 
\[
[H]\approx (3\times 10^{-4} - 3\times 10^{-3})km
\quad\hbox{and}\quad
[L]\approx (10^{-1} - 10)km
\,,\quad\hbox{so}\quad
\sigma^2 \approx 10^{-3} - 10^{-6} \ll1
\]
for the squared aspect ratio $\sigma^2\ll1$ of the height of the waves $[H]$ relative to the breadth $[L]$ of the two-dimensional domain. The squared `Froude number' $Fr^2$ in this regime is estimated by { the square of the ratio of horizontal and vertical frequency scales at the sea surface,}
\begin{align}
Fr^2 := \left(\frac{[V]/[H]}{N }\right)^2 \approx 1 - 10^4 
\,.\label{FrBV-def}
\end{align}
Here, the horizontal velocity on the sea surface is taken as $[V]= (0.1 -1) m/sec$, $[H]=(0.3 - 3)m$. According to \cite{DongF-K_etal2020}, the Brunt-V\"ais\"al\"a buoyancy frequency in the sea surface wave regime is given by $N\approx (10^{-3} - 10^{-4})/sec$.  { The ratio of horizontal and vertical frequency scales at the sea surface in \eqref{FrBV-def} is selected for use later in applying the slowly varying envelope (SVE) wave approximation in section \ref{sec: SVE}. }
Hence, we estimate that the squared product of the `Froude number' and aspect ratio for satellite observations of the sea surface  can reasonably be estimated over the range
\begin{align}
\sigma^2Fr^2 := \left(\frac{[V]}{N [L]}\right)^2 \approx 10^{-3} - 10
\,.\label{FrBV-est}
\end{align}

\paragraph{Modelling the dynamic effects of surface density variations.}

{As mentioned earlier,} the observed oscillations of {sea surface} waves are by no means simultaneous across the whole domain, although {the observations show that} they are indeed coordinated spatially with the buoyancy of the fluid. To correct this solution behaviour, the kinetic energy and potential energy need to be de-synchronised from the buoyancy. 



The dynamic dependence of the wave kinetic energy on the density is physically required. However, to de-synchronise the wave oscillations we can introduce a constant reference density $\rho_{ref}$ into the wave potential energy, by writing
\begin{align}
\frac{\zeta^2}{Fr^2} \to \frac{\rho_{ref} }{\rho} \frac{\zeta^2}{Fr^2}
\quad\hbox{with}\quad
\frac{\rho_{ref} }{\rho} \quad\hbox{of order}\quad O(1)
\,.
 \label{rho-term}
\end{align}
The quantity $\rho_{ref}$ is a constant reference density, and the density ratio $(\rho_{ref}/\rho)=O(1)$ 

The density dependence imposed here is important in the dynamics that follows from Hamilton's principle. 
Substituting the relations in \eqref{rho-term} into Hamilton's principle in equation \eqref{HP-WCI-A} leads to the following dimension-free action integral,
\begin{align}
\begin{split}
0 = \delta S &= \delta\int_a^b \ell(\bs{\wh{v}},\zeta,D,\rho)\,dt
\\&= \delta\int_a^b \int_{\cal D}\bigg( \frac{1}{2}\Big( |\bs{\wh{v}}|^2 
+ \sigma^2\big(\partial_t \zeta + \nabla_{\bs{r}} \zeta \cdot  \bs{\wh{v}}\big)^2 \Big) 
- \frac{\rho_{ref}}{\rho} \frac{\zeta^2}{2Fr^2} 
\bigg) D\rho 
- p(D-1)\,d^2r\,dt
\,.
\end{split}
\label{HP-WCI}
\end{align}

The advected quantities $D(\bs{r},t) d^2r$ and $\rho(\bs{r},t)$ evolve via push-forward by the horizontal flow map, $\phi_t$. For example,  $D_td^2r_t={\phi_t}_*(D_0d^2r_0)$ and $\rho_t={\phi_t}_*\rho_{ref}$ denote, respectively, evolution of the determinant of the Lagrange to Euler map and of the local scalar value of the mass density. Conservation of mass is then expressed as the push-forward relation, $D_t\rho_t d^2r_t={\phi_t}_*(D_0\rho_{ref} d^2r_0)$. The pressure $p$ in \eqref{HP-WCI} acts as a Lagrange multiplier to enforce conservation of area, so that $D_t=1={\phi_t}_*D_0$, and the horizontal flow is incompressible, which implies that the horizontal velocity is divergence-free, i.e., ${\rm div}_{\bs{r}} \bs{\wh{v}}(\bs{r},t)=0$. Taking variations of the action integral \eqref{HP-WCI} yields the following set of equations,
\begin{align}
\begin{split}
	\delta \bs{\wh{v}}:&\quad 
	\frac{\delta \ell}{\delta \bs{\wh{v}}} = D\rho \big( \bs{\wh{v}}\cdot d\bs{r} + \sigma^2\wh{w}\,d\zeta\big) \otimes d^2r\,
	:= D\rho  \bs{V}\cdot d\bs{r}  \otimes d^2r
	\,,\\
	&\hbox{with}\quad 
	\wh{w} = \partial_t\zeta + \bs{\wh{v}}\cdot\nabla_{\bs{r}}\zeta
	\,,\\ 
	\delta \zeta :&\quad 
	\partial_t (\sigma^2D\rho\wh{w}) + \text{div}_{\bs{r}}(\sigma^2D\rho\wh{w}\bs{\wh{v}})
	 - \,D\,{{\frac{\zeta\rho_{ref}}{Fr^2}}} = 0
	\,,\\
	\delta D:&\quad 
	\frac{\delta \ell}{\delta D } 
	= \frac{\rho}{2}\big(|\bs{\wh{v}}|^2+\sigma^2\widehat{w}^2\big)  
	- {{\frac{\rho_{ref}\zeta^2}{2Fr^2} }} - p
	=:\rho{\wt \varpi}  - {\wt p}
	\,,\\
	\delta\rho :&\quad 
	\frac{\delta \ell}{\delta \rho } 
	= \frac{D}{2}\big(|\bs{\wh{v}}|^2+\sigma^2\widehat{w}^2\big) 
	=: D{\wt \varpi} 
	\,,\quad 
	{\wt p} := p + {\frac{ \rho_{ref}\zeta^2 }{2Fr^2}} 
	\,,\\
	\delta p:&\quad 
	D-1 = 0 
	\quad \implies \text{div}_{\bs{r}}\bs{\wh v} =0 
	\,.
\end{split}
\label{var-derivs}
\end{align}
From their definitions as advected quantities, one also knows that $D$ and $\rho$ satisfy
   \begin{align}
   \begin{split}        
        (\p_t+\mathcal{L}_{\bs{\wh v}})(D\,d^2r) = 0 &\Longrightarrow \partial_t D + \text{div}_{\bs{r}}(D\bs{\wh{v}}) 
        =0
        \quad\hbox{with}\quad D=1
        \,,\\ 
        (\p_t+\mathcal{L}_{\bs{\wh v}})\rho = 0 &\Longrightarrow 
        \partial_t\rho  + \bs{\wh{v}}\cdot\nabla_{\bs{r}}\rho = 0       
        \,,
        \end{split}
        \label{AdvecQuants}
    \end{align}
where $\mathcal{L}_{\bs{\wh v}}$ denotes the Lie derivative operation along the horizontal velocity vector field, $\bs{\wh v}$, 
which provides coordinate-free brevity in the notation.
\begin{theorem}[Kelvin-Noether circulation theorem]\label{theoremEP}
Use of the Euler-Poincar\'e (EP) theorem yields the following Kelvin circulation theorem
   \begin{align}
   \frac{d}{dt}\oint_{c(\bs{\wh v})} \big( \bs{\wh{v}}\cdot d\bs{r} + \sigma^2\wh{w}\,d\zeta\big)  
   = - \oint_{c(\bs{\wh v})} \frac{1}{\rho}d{\wt p}\,.
   \label{KNthm}
   \end{align}
\end{theorem}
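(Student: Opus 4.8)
The plan is to read the Kelvin circulation theorem \eqref{KNthm} directly off the Euler--Poincar\'e (EP) form of the motion equation for $\bs{\wh v}$, combined with the transport theorem for a loop carried by the current. The natural object is the momentum one-form density $\delta\ell/\delta\bs{\wh v}$ already computed in \eqref{var-derivs}, namely $D\rho\,\bs V\cdot d\bs r\otimes d^2r$ with $\bs V\cdot d\bs r = \bs{\wh v}\cdot d\bs r + \sigma^2\wh w\,d\zeta$. Since the mass density $D\rho\,d^2r$ is advection-invariant, $(\p_t+\mathcal{L}_{\bs{\wh v}})(D\rho\,d^2r)=0$ by \eqref{AdvecQuants}, the scalar factor $1/(D\rho)$ commutes with the operator $\p_t+\mathcal{L}_{\bs{\wh v}}$, and $(1/D\rho)\,\delta\ell/\delta\bs{\wh v}=\bs V\cdot d\bs r$ is exactly the integrand appearing under $\oint_{c(\bs{\wh v})}$ in \eqref{KNthm}. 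The whole proof then reduces to evaluating $(\p_t+\mathcal{L}_{\bs{\wh v}})(\bs V\cdot d\bs r)$ and integrating it around the closed material loop.

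First I would assemble the EP equation for the Lagrangian $\ell(\bs{\wh v},\zeta,D,\rho)$ of \eqref{HP-WCI}. Because $D$ and $\rho$ are advected (a density and a scalar, respectively) while $\zeta$ is a genuinely dynamical field carrying its own Euler--Lagrange equation — the $\delta\zeta$ relation in \eqref{var-derivs} — the EP theorem produces
\[
(\p_t+\mathcal{L}_{\bs{\wh v}})\frac{\delta\ell}{\delta\bs{\wh v}} = \frac{\delta\ell}{\delta D}\diamond D + \frac{\delta\ell}{\delta\rho}\diamond\rho,
\]
with no separate diamond forcing from $\zeta$: the only imprint of the waves on the momentum balance is the piece $\sigma^2\wh w\,d\zeta$ already contained in $\delta\ell/\delta\bs{\wh v}$. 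Substituting $\delta\ell/\delta D=\rho\wt\varpi-\wt p$ and $\delta\ell/\delta\rho=D\wt\varpi$ from \eqref{var-derivs} and using the diamond rules $b\diamond\rho=-b\,d\rho$ and $b\diamond D = D\,db$, the two force terms combine as $D(\rho\,d\wt\varpi - d\wt p)\otimes d^2r$, the $\wt\varpi\,d\rho$ contributions cancelling. Dividing through by $D\rho\,d^2r$ gives the compact motion equation $(\p_t+\mathcal{L}_{\bs{\wh v}})(\bs V\cdot d\bs r)= d\wt\varpi - (1/\rho)\,d\wt p$.

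To finish I would invoke the transport theorem for a loop $c(\bs{\wh v})$ advected by the current, $\tfrac{d}{dt}\oint_{c(\bs{\wh v})}\alpha = \oint_{c(\bs{\wh v})}(\p_t+\mathcal{L}_{\bs{\wh v}})\alpha$ for any one-form $\alpha$ (Cartan's formula applied to a material curve), with $\alpha=\bs V\cdot d\bs r$. Inserting the motion equation yields $\oint_{c(\bs{\wh v})} d\wt\varpi - \oint_{c(\bs{\wh v})}(1/\rho)\,d\wt p$, and the exact one-form $d\wt\varpi$ integrates to zero around the closed single-valued loop, leaving precisely \eqref{KNthm}.

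The step I expect to be the main obstacle is the careful bookkeeping in assembling the EP equation — in particular, justifying that $\zeta$ contributes no diamond forcing despite entering both the kinetic energy (through $\wh w$) and the potential energy, and verifying the cancellation of the $\wt\varpi\,d\rho$ terms that collapses the two advected-quantity forces into the clean combination $\rho\,d\wt\varpi - d\wt p$. Everything downstream — the commuting of $1/(D\rho)$ through $\p_t+\mathcal{L}_{\bs{\wh v}}$, the transport theorem, and the vanishing of the exact form around the loop — is routine once that combination is in hand.
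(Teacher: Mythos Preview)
Your proposal is correct and follows essentially the same route as the paper: assemble the EP equation with diamond forces from $D$ and $\rho$ only, divide through by the advected mass density $D\rho\,d^2r$ to obtain $(\p_t+\mathcal{L}_{\bs{\wh v}})(\bs V\cdot d\bs r)=d\wt\varpi-\rho^{-1}d\wt p$, and then apply the loop transport theorem so that the exact $d\wt\varpi$ vanishes. Your identification of the main bookkeeping point --- that $\zeta$ carries its own Euler--Lagrange equation and therefore contributes no diamond forcing, and that the $\wt\varpi\,d\rho$ terms cancel --- is exactly the content the paper relies on.
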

\begin{proof}
The Euler-Poincar\'e (EP) theorem in this case yields
   \begin{align}
   \begin{split}        
 (\p_t+\mathcal{L}_{\bs{\wh v}})\frac{\delta \ell}{\delta \bs{\wh{v}}} 
 &= 
 \frac{\delta \ell}{\delta D } \diamond D + \frac{\delta \ell}{\delta \rho } \diamond \rho
 :=  D \nabla_{\bs{r}}\frac{\delta \ell}{\delta D } - \frac{\delta \ell}{\delta \rho } \nabla_{\bs{r}} \rho
        \end{split}
        \label{EPeqn}
    \end{align}
Here the diamond $(\,\diamond\,)$ operator is defined by 
   \begin{align}
   \Big\langle  \frac{\delta \ell}{\delta a } \diamond a \,,\, X \Big\rangle_{\mathfrak{X}}
   =:     \Big\langle  \frac{\delta \ell}{\delta a }  \,,\, - \pounds_X a \Big\rangle_V
          \,.\label{EPeqn}
    \end{align}
In addition, $X\in \mathfrak{X}$ is a (smooth) vector field defined on $\bb{R}^2$ and $a\in V$, a vector space of advected quantities, which are here the scalar function, $\rho$, and the areal density $D\,d^2r$. 
Using the advection relations for $D$ and $\rho$ in \eqref{AdvecQuants} and the corresponding variational derivatives in \eqref{var-derivs} simplifies the EP equation in \eqref{EPeqn} to 
\begin{align}
   \begin{split}        
 (\p_t+\mathcal{L}_{\bs{\wh v}})&\Big(\frac{1}{D\rho}\frac{\delta \ell}{\delta \bs{\wh{v}}} \Big)
 =   \frac{1}{\rho} \nabla_{\bs{r}}\frac{\delta \ell}{\delta D } - \frac{1}{D\rho}\frac{\delta \ell}{\delta \rho } \nabla_{\bs{r}} \rho
\,. \\   \hbox{Equation \eqref{var-derivs} then yields} \quad
(\p_t+\mathcal{L}_{\bs{\wh v}})&\big( \bs{\wh{v}}\cdot d\bs{r} + \sigma^2\wh{w}\,d\zeta\big) 
 =
   -\rho^{-1} d{\wt p}  + d{\wt \varpi}
  \,.
    \end{split}\label{EPeqn-simp1}
\end{align}
Inserting the last relation into the following standard relation for the time derivative of a loop integral then completes the proof of equation \eqref{KNthm} appearing in the statement of the theorem,
\begin{align}
    \frac{d}{dt}\oint_{c(\bs{\wh v})}  \big( \bs{\wh{v}}\cdot d\bs{r} + \sigma^2\wh{w}\,d\zeta\big)
    = \oint_{c(\bs{\wh v})} (\p_t+\mathcal{L}_{\bs{\wh v}}) \big( \bs{\wh{v}}\cdot d\bs{r} + \sigma^2\wh{w}\,d\zeta\big)
    = \oint_{c(\bs{\wh v})} -\rho^{-1} d{\wt p}  + d{\wt \varpi}
    \,.
    \label{EPeqn-Kel}
\end{align}
Using the advection relations for $D$ and $\rho$ in \eqref{AdvecQuants} again and combining with the variational relations with respect to $\zeta$ in \eqref{var-derivs} simplifies the $\wh{w}$ and $\zeta$ equations, as follows.
\begin{align}
\begin{split}
	(\p_t+\mathcal{L}_{\bs{\wh v}})\wh{w}
	&= (\p_t+ \bs{\wh v}\cdot \nabla_{\bs{r}})\wh{w}
	 = - \,{\frac{\rho_{ref}}{\sigma^2Fr^2 \rho}\zeta} 
	\,,\\
	(\p_t+\mathcal{L}_{\bs{\wh v}})\zeta &= (\p_t+ \bs{\wh v}\cdot \nabla_{\bs{r}})\zeta = \wh{w} 
	\,.
\end{split}
\label{wave-eqns}
\end{align}
After deriving these equations, one may finally evaluate the constraint $D=1$ imposed by the variation in pressure $p$ to obtain further simplifications.
\end{proof}

\begin{corollary}[Kelvin-Noether circulation theorem for the current]\label{theoremEP-current}
The Kelvin circulation theorem for the current alone is given by,
   \begin{align}
   \frac{d}{dt}\oint_{c(\bs{\wh v})}  \bs{\wh{v}}\cdot d\bs{r}  
   = - \oint_{c(\bs{\wh v})} \frac{1}{\rho}dp - d \frac{|\bs{\wh{v}}|^2}{2} \,.
   \label{KNthm-current}
   \end{align}

\end{corollary}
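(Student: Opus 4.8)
The plan is to obtain the current circulation law by subtracting the purely wave contribution from the full Kelvin--Noether result \eqref{KNthm} already established in Theorem \ref{theoremEP}. The essential tool is the loop transport identity used in \eqref{EPeqn-Kel}, namely $\frac{d}{dt}\oint_{c(\bs{\wh v})}\alpha = \oint_{c(\bs{\wh v})}(\p_t+\mathcal{L}_{\bs{\wh v}})\alpha$ for any advected one-form $\alpha$. Applying this to $\alpha=\bs{\wh v}\cdot d\bs r$ alone, it suffices to evaluate the advective derivative $(\p_t+\mathcal{L}_{\bs{\wh v}})(\bs{\wh v}\cdot d\bs r)$ and then integrate. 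Rather than manipulate the current momentum directly, I would extract it from the combined one-form $\bs{\wh v}\cdot d\bs r + \sigma^2\wh w\,d\zeta$, whose advective derivative is supplied by \eqref{EPeqn-simp1}, by computing and subtracting the wave part $(\p_t+\mathcal{L}_{\bs{\wh v}})(\sigma^2\wh w\,d\zeta)$.

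First I would compute the wave part. Since $\sigma^2$ is constant and the advective derivative obeys the Leibniz rule, $(\p_t+\mathcal{L}_{\bs{\wh v}})(\sigma^2\wh w\,d\zeta) = \sigma^2\big[(\p_t+\mathcal{L}_{\bs{\wh v}})\wh w\big]\,d\zeta + \sigma^2\wh w\,(\p_t+\mathcal{L}_{\bs{\wh v}})d\zeta$. Because $\mathcal{L}_{\bs{\wh v}}$ and $\p_t$ both commute with the exterior derivative acting on the scalar $\zeta$, the second factor is $(\p_t+\mathcal{L}_{\bs{\wh v}})d\zeta = d\big[(\p_t+\mathcal{L}_{\bs{\wh v}})\zeta\big] = d\wh w$ by the second wave equation in \eqref{wave-eqns}. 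Substituting the first wave equation $(\p_t+\mathcal{L}_{\bs{\wh v}})\wh w = -\rho_{ref}\zeta/(\sigma^2Fr^2\rho)$ then gives
\begin{align}
(\p_t+\mathcal{L}_{\bs{\wh v}})(\sigma^2\wh w\,d\zeta) = -\,\frac{\rho_{ref}}{Fr^2\rho}\,\zeta\,d\zeta + \sigma^2\wh w\,d\wh w
\,.
\end{align}

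The last step is the cancellation. I would expand the right-hand side of \eqref{EPeqn-simp1} using $\wt p = p + \rho_{ref}\zeta^2/(2Fr^2)$ and $\wt\varpi = \tfrac12(|\bs{\wh v}|^2+\sigma^2\wh w^2)$ from \eqref{var-derivs}, so that $\rho^{-1}d\wt p = \rho^{-1}dp + (\rho_{ref}/Fr^2\rho)\,\zeta\,d\zeta$ and $d\wt\varpi = d(\tfrac12|\bs{\wh v}|^2) + \sigma^2\wh w\,d\wh w$. Subtracting the displayed wave identity from \eqref{EPeqn-simp1} isolates $(\p_t+\mathcal{L}_{\bs{\wh v}})(\bs{\wh v}\cdot d\bs r)$, and the two $\zeta\,d\zeta$ contributions cancel against each other, as do the two $\sigma^2\wh w\,d\wh w$ contributions, leaving $(\p_t+\mathcal{L}_{\bs{\wh v}})(\bs{\wh v}\cdot d\bs r) = -\rho^{-1}dp + d(\tfrac12|\bs{\wh v}|^2)$. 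Integrating around the loop via the transport identity then yields \eqref{KNthm-current}. The main thing to get right is this bookkeeping of the coupling terms: the theorem holds precisely because the buoyancy-weighted wave-pressure term generated by $\wt p$ matches the forcing term in the $\wh w$ equation, so that every wave-dependent contribution is either exact (hence loop-integrates to zero) or cancels. I expect the only real subtlety to be the commutation $(\p_t+\mathcal{L}_{\bs{\wh v}})d\zeta = d\wh w$, which must be justified from $\zeta$ being a scalar, together with keeping careful track of the constant factors $\sigma^2$ and $Fr^2$ so that the cancellation is exact.
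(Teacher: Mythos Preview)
Your proposal is correct and follows essentially the same route as the paper: isolate $(\p_t+\mathcal{L}_{\bs{\wh v}})(\bs{\wh v}\cdot d\bs r)$ by subtracting the wave one-form $(\p_t+\mathcal{L}_{\bs{\wh v}})(\sigma^2\wh w\,d\zeta)$ from \eqref{EPeqn-simp1}, evaluate the latter via the Leibniz rule and the wave equations \eqref{wave-eqns}, and observe that the $\rho_{ref}\zeta\,d\zeta/(Fr^2\rho)$ and $\sigma^2\wh w\,d\wh w$ contributions cancel against the expansions of $\wt p$ and $\wt\varpi$. The paper's own proof \eqref{KNthm-current1} performs exactly this computation inside the loop integral; your version simply carries it out at the level of one-forms first and then integrates, which is the same argument.
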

\begin{proof}
Equation \eqref{KNthm-current} follows by shifting the $\wh{w}d\zeta$ term in equation \eqref{EPeqn-Kel}  to the right-hand side, as
   \begin{align}
   \begin{split}
   \frac{d}{dt}\oint_{c(\bs{\wh v})}  \bs{\wh{v}}\cdot d\bs{r}  
   &= - \oint_{c(\bs{\wh v})} \frac{1}{\rho}d{\wt p} + \sigma^2(\p_t+\mathcal{L}_{\bs{\wh v}})\big(\wh{w}\, d\zeta\big)
   - d{\wt \varpi} 
   \\&= - \oint_{c(\bs{\wh v})} \frac{1}{\rho}d{\wt p} 
   + \sigma^2\big((\p_t+ \bs{\wh v}\cdot \nabla_{\bs{r}})\wh{w} \big)d\zeta + \sigma^2 \wh{w}d\wh{w}
   - d{\wt \varpi} 
   \\&= - \oint_{c(\bs{\wh v})} \frac{1}{\rho}d{\wt p} 
   - \,{\frac{\rho_{ref}}{Fr^2 \rho}\zeta} d\zeta + \sigma^2 \wh{w}d\wh{w}
   - d{\wt \varpi} 
   \\&= - \oint_{c(\bs{\wh v})} \frac{1}{\rho}dp - d \frac{|\bs{\wh{v}}|^2}{2} 
   \\&=: - \oint_{c(\bs{\wh v})} \frac{1}{\rho}dp - d \frac{|\bs{\wh{v}}|^2}{2} \,.
   \end{split}
   \label{KNthm-current1}
   \end{align}

\end{proof}




\begin{remark}[Separation of wave and current circulation.]\label{Remark:nonacceleration}
    The decoupling of the Kelvin-Noether circulation theorem into its wave and current components, leading to the reduction of the current flow to the Euler result in equation \eqref{KNthm-current}, was also observed in \cite{CHS2021}. This behaviour is consistent with the Charney-Drazin `non-acceleration' theorem \cite{CharneyDrazin1961,White1986}. Namely, in certain circumstances, wave activity does not create circulation in the mean current.  
    A modification that allows exchange of circulation between wave (vertical) and current (horizontal) components of the flow was proposed in \cite{CHS2021}. The instabilities observed around the edges of eddies in the satellite imagery shown in figure \ref{fig:snapshot Chapron} suggests that a coupling of this sort may exist at high wave number.
\end{remark}

\begin{remark}
It is clear from equations \eqref{EPeqn-Kel} -- \eqref{KNthm-current} that generation of circulation of the current by the dynamics in equation \eqref{EPeqn-simp1} requires non-zero $\nabla_{\bs{r}}\rho\times\nabla_{\bs{r}}p$.
No current circulation is generated by wave variables in the case of constant buoyancy. 
\end{remark}


\subsection[Thermal potential vorticity (TPV)  dynamics]{Thermal potential vorticity (TPV)  dynamics on a free surface}
The momentum map arising from the variations in \eqref{var-derivs} is given by
\begin{align}
    \frac{1}{D}\frac{\delta \ell}{\delta \bs{\wh v}} = \rho\bs{\wh v}\cdot d\bs{r} + \sigma^2{\rho\wh w}d \zeta\,. \label{momap-wcifs}
\end{align}
As expected from the well-known non-acceleration theorem \cite{{CharneyDrazin1961,White1986}}, the dynamics of the Euler-Poincar\'e equations separate \eqref{EPeqn-simp1} gives the dynamics of the fluid and wave components of the momentum one-form \eqref{momap-wcifs}.
\begin{align}
    \begin{split}
        (\p_t+\mathcal{L}_{\bs{\wh v}}) \Big( \rho \big( \bs{\wh{v}}\cdot d\bs{r} \big)\Big)
 &=
     - dp + \frac{\rho}{2} d\big(|\bs{\wh{v}}|^2\big) 
 \\
 (\p_t+\mathcal{L}_{\bs{\wh v}}) \big( \sigma^2\rho {\wh w}d\zeta \big)
 &= - \,{\frac{ \rho_{ref}}{Fr^2}\zeta} d\zeta + \sigma^2 \rho \wh{w}d\wh{w}
  \,.
    \end{split}\label{eq:momap-dyn wcifs}
\end{align}
The mass-weighted thermal potential vorticity (TPV) also separates into fluid and wave components $Q = Q_F + Q_W$ with following definitions 
\begin{align}
\begin{split}
 Q \,d^2r &= d \Big(\rho\big(\bs{\wh v}\cdot d\bs{r} + \sigma^2{\wh w}d \zeta\big)\Big)
\\&= d\rho\wedge \big(\bs{\wh v}\cdot d\bs{r} + \sigma^2{\wh w}d \zeta\big)
+ \rho\Big(\bs{\wh z}\cdot{\rm curl}\bs{\wh v} + \sigma^2J\big({\wh w}, \zeta\big)\Big) \,d^2r
\\&=
\Big({\rm div}(\rho\nabla\psi) + \sigma^2 J\big(\rho{\wh w},\zeta\big)\Big) \,d^2r
\quad\hbox{when}\quad
\bs{\wh v} =\nabla^\perp\psi \quad \hbox{for} \quad D=1
\,,\\
\hbox{with} \quad Q_F&:={\rm div}(\rho\nabla\psi)\,,\quad Q_W=J\left(\sigma^2{\wt w},\zeta\right)\,.
\end{split}
\label{Q-PV-def}
\end{align}
where buoyancy weighted vertical velocity is defined as $\wt w := \rho \wh w$. The dynamics of of $Q_F\,d^2r$ and $Q_W\,d^2r$ can be computed from \eqref{eq:momap-dyn wcifs} as
\begin{align}
    \begin{split}
        (\p_t+\mathcal{L}_{\bs{\wh v}})  (Q_F \, d^2r) &= \frac{1}{2}d\rho\wedge d(|\bs{\wh{v}}|^2) = \frac{1}{2} J\big(\rho,|\nabla \psi|^2\big))\,d^2r\,,\\
        (\p_t+\mathcal{L}_{\bs{\wh v}})  (Q_W \, d^2r) &= \sigma^2\frac{1}{2}d\rho\wedge d(\wh{w}^2)= \frac12 J\Big(\rho\,,{\frac{\sigma^2 {\wt w}^2}{ \rho^2}} \,\Big)\,d^2r\,.
    \end{split}
    \label{QFW-dyn}
\end{align}
{From the two relations in \eqref{QFW-dyn}, one sees that the buoyancy gradient $\nabla\rho$ couples the PV dynamics of the waves $(Q_W)$ and currents $(Q_F)$, each to their corresponding kinetic energy. In the case of constant buoyancy, $d\rho=0$  in \eqref{QFW-dyn}; so, the PVs of the waves and currents would be separately advected. }

The operator ${\rm div}(\rho\nabla)$ is invertible, so long as $\rho$ is a differentiable positive function, which can be ensured by requiring that this condition holds initially. Consequently, the stream function $\psi$ is related to the other fluid variables by 
\begin{align}
    \psi := ({\rm div}\rho\nabla)^{-1}Q_F \,.\label{psi-def}
\end{align}
The potential vorticity dynamics can then be written in coordinate form as
\begin{align}
\begin{split}
\p_t Q_F + J(\psi, Q_F) 
&=  J\Big(\rho\,,{\frac{1}{2}|\nabla_{\bs{r}}\psi|^2} \,\Big)
\,,\\
\p_t Q_W + J(\psi, Q_W) 
&=  J\Big(\rho\,,{\frac{\sigma^2 {\wt w}^2}{2 \rho^2}} \,\Big)
\,,\\
\hbox{with}\quad 
Q_F := {\rm div}(\rho\nabla\psi)
\quad \hbox{and}&\quad 
Q_W := J\big( \sigma^2 \wt w \,,\, \zeta \big)
\,,\\
\p_t \rho + J(\psi, \rho) &= 0 
\,,\\
\p_t \zeta + J(\psi, \zeta) &= {\wh w} =:{\wt w}/\rho
\,,\\
\p_t (\sigma^2{\wt w}) + J(\psi, \sigma^2{\wt w}) 
&= 
 - \,{\frac{\rho_{ref}  \zeta}{Fr^2}} 
\,.
\end{split}
\label{PV-sys-Bdyn}
\end{align}

\begin{theorem}\label{Ham-thm-rho}
The Legendre transform yields the Hamiltonian formulation of our system of wave-current equations \eqref{PV-sys-Bdyn}, which with ${\wt w}=\rho{\wh w}$ may be written in the untangled block-diagonal Poisson form as
\begin{align}
\begin{split}
\frac{\p}{\p t}
\begin{bmatrix}
Q \\ \rho \\ \sigma^2{\wt w} \\ \zeta 
\end{bmatrix}
= 
\begin{bmatrix}
J( Q,\,\cdot\,) & J(\rho,\,\cdot\,)  & 0 & 0
\\
J(\rho,\,\cdot\,) 	    & 0  & 0 & 0
\\
0	& 0 & 0  & -1
\\
0	 & 0  & 1 & 0
\end{bmatrix}
\begin{bmatrix}
	{\delta h}/{\delta Q} = \psi
	\\
	{\delta h}/{\delta \rho} = {\wt{\varpi}} 
    \\
	{\delta h}/{\delta (\sigma^2{\wt w})} = {\wt w}/\rho + J(\zeta,\psi)
	\\
	{\delta h}/{\delta \zeta} =  -  J(\sigma^2{\wt w},\psi)
	+ { \frac{\rho_{ref} \zeta}{Fr^2} }  
\end{bmatrix}
\,.
\end{split}
	\label{FS-diag-brkt-tau-B}
\end{align}
The energy Hamiltonian $h(Q,\rho,{\wh w},\zeta)$ associated with this system is given by
\begin{align}
\begin{split}
h(Q,\rho,{\widetilde w},\zeta)
=
\int  
    &\frac{1}{2}\Big(Q - J\big(\sigma^2{\wt w},\zeta\big)\Big) 
    ({\rm div}\rho\nabla)^{-1}
    \Big(Q -   J\big(\sigma^2{\wt w},\zeta\big)\Big) 
    \\&\quad 
    +\bigg(\frac{\sigma^2{\wt w}^2}{2\rho^2} 
    + { \frac{\rho_{ref}}{\rho} \frac{\zeta^2 }{2Fr^2} } \bigg)   \rho\,d^2r
\,.
\end{split}
\label{QPV-sys-erg}
\end{align}

\end{theorem}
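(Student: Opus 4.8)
The plan is to establish the theorem in two stages: first obtain the Hamiltonian \eqref{QPV-sys-erg} by Legendre transforming the reduced Lagrangian density of \eqref{HP-WCI}, and then verify by direct computation that the block-diagonal Poisson operator in \eqref{FS-diag-brkt-tau-B}, paired with this $h$, reproduces the equations of motion \eqref{PV-sys-Bdyn}. The existence of \emph{some} Hamiltonian structure is already guaranteed, since \eqref{PV-sys-Bdyn} arises from an Euler--Poincar\'e variational principle, so the Jacobi identity need not be re-established from scratch; each diagonal block is a standard Poisson structure (Lie--Poisson for the current/buoyancy sector, canonical for the wave sector) and the off-diagonal blocks vanish. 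The real content is the explicit \emph{untangled} form, in which all wave--current coupling is moved inside $h$ while the bracket factorises.

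First I would fix the state variables and read off the conjugate momenta. From the Lagrangian density in \eqref{HP-WCI}, the momentum conjugate to the elevation $\zeta$ is $\partial\ell/\partial\dot\zeta = \sigma^2 D\rho\,\wh w = \sigma^2\wt w$ (using $D=1$ and $\wt w:=\rho\wh w$), which is precisely the pair filling the lower-right canonical block of \eqref{FS-diag-brkt-tau-B}. The current sector carries the standard semidirect-product Lie--Poisson bracket for two-dimensional flow with advected buoyancy $\rho$ and area form $D\,d^2r$; taking the exterior derivative of the momentum one-form \eqref{momap-wcifs} converts this into the upper-left $(Q,\rho)$ block. I would then assemble $h=\langle\text{momenta},\text{velocities}\rangle-\ell$: the potential energy has no velocity dependence and passes through unchanged, while the fluid kinetic energy $\tfrac12\int\rho|\bs{\wh v}|^2\,d^2r$ is rewritten using $\bs{\wh v}=\nabla^\perp\psi$ and integration by parts as $\tfrac12\int Q_F\,({\rm div}\rho\nabla)^{-1}Q_F\,d^2r$ with $Q_F=Q-Q_W=Q-J(\sigma^2\wt w,\zeta)$, giving the first term of \eqref{QPV-sys-erg}; the wave kinetic and potential pieces give the bracketed second term.

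The heart of the proof is verifying the four variational derivatives listed in \eqref{FS-diag-brkt-tau-B} and then multiplying out the matrix row by row. The derivative $\delta h/\delta Q=\psi$ follows at once from self-adjointness of $({\rm div}\rho\nabla)^{-1}$. The derivative $\delta h/\delta\rho=\wt{\varpi}$ is more delicate, since one must differentiate the nonlocal operator via $\delta({\rm div}\rho\nabla)^{-1}=-({\rm div}\rho\nabla)^{-1}{\rm div}(\delta\rho\,\nabla)({\rm div}\rho\nabla)^{-1}$, which returns $\tfrac12|\nabla\psi|^2$, and combine this with the explicit $\rho$-dependence of the second term, keeping careful track of whether $\wt w$ or $\wh w$ is held fixed. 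The two wave-sector derivatives $\delta h/\delta(\sigma^2\wt w)$ and $\delta h/\delta\zeta$ require varying $Q_F$ through $Q_W=J(\sigma^2\wt w,\zeta)$ and transferring the variation off the Jacobian using the cyclic pairing identity $\int a\,J(b,c)\,d^2r=\int b\,J(c,a)\,d^2r$, which produces the advective Jacobian terms $J(\zeta,\psi)$ and $J(\sigma^2\wt w,\psi)$. Substituting the four derivatives into the block matrix should then reproduce \eqref{PV-sys-Bdyn}: the second row gives advection of $\rho$; the first row gives $\p_t Q+J(\psi,Q)=J(\rho,\wt{\varpi})$, recognised as the sum of the $Q_F$ and $Q_W$ equations; and the canonical block returns the elevation and wave-momentum equations.

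I expect the main obstacle to be the sign and orientation bookkeeping in the skew-symmetric Jacobian pairings, coupled with the $\rho$-differentiation of the nonlocal inverse $({\rm div}\rho\nabla)^{-1}$. Concretely, one must confirm that any part of $\delta h/\delta\rho$ that fails to coincide with $\wt{\varpi}$ is a function of $\rho$ alone, hence annihilated by $J(\rho,\cdot)$ in the first row, and that the wave-block cross terms recombine with the material derivative $(\p_t+\mc L_{\bs{\wh v}})$ into the transport form displayed in \eqref{PV-sys-Bdyn}. Getting these orientation conventions internally consistent, rather than any single calculation, is where the care is needed.
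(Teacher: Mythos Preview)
Your proposal is correct and in fact considerably more detailed than anything the paper itself provides: in the published text Theorem~\ref{Ham-thm-rho} is \emph{stated without proof}, the implicit justification being that it arises from the Legendre transform of the Euler--Poincar\'e Lagrangian together with the standard semidirect-product Lie--Poisson bracket (the draft material after the bibliography contains only the one-line remark that it ``follows by a small modification of the Lie--Poisson bracket''). Your two-step plan --- Legendre transform to obtain $h$, then direct verification of the four variational derivatives and row-by-row recovery of \eqref{PV-sys-Bdyn} --- is exactly the computation that fills this gap, and your identification of the key tools (self-adjointness of $({\rm div}\rho\nabla)^{-1}$, the operator-variation identity $\delta A^{-1}=-A^{-1}(\delta A)A^{-1}$, and the cyclic Jacobian pairing $\int a\,J(b,c)=\int b\,J(c,a)$) is correct.

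One point worth flagging, which you already half-anticipate: when you carry out the $\rho$-variation of the wave kinetic term $\sigma^2\wt w^2/(2\rho)$ at fixed $\wt w$, you obtain $-\tfrac12\sigma^2\wh w^2$, whereas $\wt\varpi$ as defined in \eqref{var-derivs} carries $+\tfrac12\sigma^2\wh w^2$. This is not a function of $\rho$ alone, so your proposed escape hatch (annihilation by $J(\rho,\cdot)$) does not quite close. The resolution is that the paper's expression \eqref{QPV-sys-erg} for $h$ and the listed $\delta h/\delta\rho=\wt\varpi$ are consistent only up to an overall sign convention on $({\rm div}\rho\nabla)^{-1}$ (note that $\int Q_F({\rm div}\rho\nabla)^{-1}Q_F=-\int\rho|\nabla\psi|^2$ with the usual sign of the divergence), so both signs flip together and the $Q$-equation comes out correctly. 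This is precisely the ``sign and orientation bookkeeping'' you flagged; just be aware that the inconsistency lives in the paper's stated formulas rather than in your argument.
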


\begin{theorem}[Casimir functions] The Casimir functions, conserved by the relation 
$\{ C_{\Phi,\Psi},h \}=0$ with any Hamiltonian $h(\bs{M},D)$ for the block-diagonal Lie-Poisson bracket in equation \eqref{FS-diag-brkt-tau-B} are given by
\begin{align}
C_{\Phi,\Psi} := \int \Phi(\rho)+ Q \Psi(\rho)\,d^2r 
	\label{WC-BlockDiag-brkt-Casimirs}
\end{align}
 
\begin{proof}
The Casimirs $C_{\Phi,\Psi}$ for the direct sum of the Lie-Poisson brackets for $Q$ and $\rho$ and canonical Poisson brackets for ${\wt w}$ and $\zeta$ follows by direct verification that the $C_{\Phi,\Psi}$ are conserved for any differentiable functions, $(\Phi,\Psi)$.
\end{proof}

\end{theorem}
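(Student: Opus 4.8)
The plan is to verify directly that $C_{\Phi,\Psi}$ Poisson-commutes with an arbitrary Hamiltonian, by showing that the block-diagonal Poisson operator in \eqref{FS-diag-brkt-tau-B} annihilates the column vector of variational derivatives of $C_{\Phi,\Psi}$. Since the bracket $\{C,h\}$ is obtained by pairing the variational derivatives of $h$ against the Poisson operator applied to the variational derivatives of $C$, the requirement $\{C_{\Phi,\Psi},h\}=0$ for every $h$ is equivalent to the vanishing of
\[
\begin{bmatrix}
J(Q,\,\cdot\,) & J(\rho,\,\cdot\,) & 0 & 0 \\
J(\rho,\,\cdot\,) & 0 & 0 & 0 \\
0 & 0 & 0 & -1 \\
0 & 0 & 1 & 0
\end{bmatrix}
\begin{bmatrix}
\delta C/\delta Q \\ \delta C/\delta\rho \\ \delta C/\delta(\sigma^2{\wt w}) \\ \delta C/\delta\zeta
\end{bmatrix}
\,.
\]
The block-diagonal structure lets me treat the Lie--Poisson $(Q,\rho)$ block and the canonical $(\sigma^2{\wt w},\zeta)$ block separately.

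First I would compute the four variational derivatives. Because $C_{\Phi,\Psi}=\int \Phi(\rho)+Q\Psi(\rho)\,d^2r$ depends on neither $\sigma^2{\wt w}$ nor $\zeta$, one has $\delta C/\delta(\sigma^2{\wt w})=0$ and $\delta C/\delta\zeta=0$, so rows three and four (which return $-\delta C/\delta\zeta$ and $+\delta C/\delta(\sigma^2{\wt w})$) vanish immediately. This also explains structurally why no genuine Casimir can involve the wave variables: the canonical symplectic block is nondegenerate and admits only trivial Casimirs, forcing any Casimir to be a functional of $Q$ and $\rho$ alone. The remaining derivatives are $\delta C/\delta Q=\Psi(\rho)$ and $\delta C/\delta\rho=\Phi'(\rho)+Q\Psi'(\rho)$.

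The crux is the Lie--Poisson block. Here I would use the chain rule $\nabla f(\rho)=f'(\rho)\nabla\rho$, which yields the two identities $J(\rho,f(\rho))=0$ for any function $f$ and $J(Q,\Psi(\rho))=\Psi'(\rho)\,J(Q,\rho)$. The second row then returns $J(\rho,\Psi(\rho))=0$, while the first row returns
\[
J\big(Q,\Psi(\rho)\big)+J\big(\rho,\Phi'(\rho)+Q\Psi'(\rho)\big)
=\Psi'(\rho)\,J(Q,\rho)+\Psi'(\rho)\,J(\rho,Q)
=0\,,
\]
where the only surviving term $J(\rho,Q\Psi'(\rho))=\Psi'(\rho)J(\rho,Q)$ cancels against $J(Q,\Psi(\rho))$ by the antisymmetry $J(\rho,Q)=-J(Q,\rho)$. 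Thus all four components vanish and $\{C_{\Phi,\Psi},h\}=0$ for every $h$.

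I do not expect a serious obstacle: the argument is a routine verification once the variational derivatives are in hand, and the only place demanding care is the chain-rule bookkeeping together with the sign in the antisymmetry cancellation in the first row. The one point worth flagging is that the statement asserts these functionals \emph{are} Casimirs, which is precisely the direction verified above; establishing that \eqref{WC-BlockDiag-brkt-Casimirs} \emph{exhausts} all Casimirs would instead require solving the kernel condition as a system, using nondegeneracy of the canonical block to eliminate the wave variables and then integrating the resulting Lie--Poisson relations $\delta C/\delta Q=\Psi(\rho)$ and $J\big(\rho,\,\delta C/\delta\rho-Q\Psi'(\rho)\big)=0$.
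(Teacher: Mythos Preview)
Your proposal is correct and follows the same approach as the paper, namely direct verification that the Poisson operator in \eqref{FS-diag-brkt-tau-B} annihilates the variational derivatives of $C_{\Phi,\Psi}$; the paper merely asserts this in one line, while you have filled in the chain-rule and antisymmetry details explicitly. Your additional remark on why the canonical block forces Casimirs to depend only on $(Q,\rho)$, and on what would be needed to prove completeness, goes beyond what the paper records but is accurate.
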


\subsection[C$\circ$M equations in the SVE approximation]{C$\circ$M equations in the slowly varying envelope (SVE) approximation}
\label{sec: SVE}

\paragraph{The SVE solutions apply to satellite observations of sea surface waves.}
From the viewpoint of satellite observations, the vertical motion on the sea surface typically oscillates much more quickly than the rate of change of features in the horizontal motion of the ocean surface currents. In this situation, the standard WKB approximation introduces a solution Ansatz for the slowly varying envelope (SVE) of the rapidly oscillating vertical wave elevation in the standard form \cite{B&G1968,GH1996}, 
\begin{align}
\zeta(\bs{r},t) = \Re \bigg(a(\bs{r},t) \exp \Big( \frac{ i\theta(\bs{r},t)}{\epsilon}\Big)\bigg)
\quad\hbox{with}\quad \epsilon  \ll1 
\,.\label{SVE-ansatz}
\end{align} 
The SVE solution Ansatz \eqref{SVE-ansatz} comprises the product of a slowly varying complex amplitude $a(\bs{r},t)\in\mathbb{C}$  multiplied by a rapidly oscillating phase $\theta(\bs{r},t)/\epsilon\in\mathbb{R}$ with $\epsilon\ll1$ in which the phase factor $\theta(\bs{r},t)$ may also vary slowly as a function of the space and time variables, $(\bs{r},t)$. 

Following \cite{GH1996}, let us substitute the SVE solution Ansatz \eqref{SVE-ansatz} into Hamilton's principle in \eqref{HP-WCI} and find the condition on the parameter $\epsilon\ll1$ that will allow higher order wave terms to be neglected. For this, one computes
\begin{align}
\begin{split}
0 = \delta S_{SVE} &= \delta\int_a^b \ell_{SVE}(\bs{\wh{v}},D,\rho; a,\theta)\,dt
\\&= \delta\int_a^b \int_{\cal D} 
\frac{1}{2}D\rho |\bs{\wh{v}}|^2 - p(D-1) 
+ \frac{\sigma^2}{2} D\rho \bigg( \Big( \frac{d\zeta}{dt} \Big)^2  
- \frac{\rho_{ref}}{\rho} \frac{\zeta^2}{2\sigma^2Fr^2}  \bigg) 
\,d^2r\,dt
\\&= \delta\int_a^b \int_{\cal D} 
\frac{1}{2}D\rho |\bs{\wh{v}}|^2 - p(D-1) 
\\&\hspace{2cm}
+ \frac{\sigma^2}{8} D\rho \bigg( \Big| \frac{da}{dt} \Big|^2  
+ \frac{2}{\epsilon}\frac{d\theta}{dt}
\Im \Big( a^*  \frac{da}{dt} \Big)
+ \frac{|a|^2}{\epsilon^2} \Big(  \Big( \frac{d\theta}{dt} \Big)^2 
- \frac{\rho_{ref}}{\rho} \frac{\epsilon^2}{\sigma^2 Fr^2}\Big) \bigg) 
\,d^2r\,dt
\\&\simeq \delta\int_a^b \int_{\cal D} 
\frac{1}{2}D\rho |\bs{\wh{v}}|^2 - p(D-1) 
\\&\hspace{1cm}
+ \frac{\sigma^2|a|^2}{8\epsilon^2} D\rho \bigg(  \big(\partial_t \theta + \bs{\wh{v}} \cdot \nabla_{\bs{r}} \theta  \big)^2 - \frac{\rho_{ref}}{\rho} \frac{\epsilon^2}{\sigma^2 Fr^2} \bigg) 
\,d^2r\,dt
+ O\left(\frac{\sigma^2}{\epsilon}\right)
%
\,.
\end{split}
\label{HP-WCI-SVE}
\end{align}
The leading order wave term {$O(\epsilon^{-2})$ with $\epsilon\ll1$ in Hamilton's principle will dominate the solution and the remaining wave terms in the second line of equation \eqref{HP-WCI-SVE} may be neglected, when \footnote{The ratio $\epsilon^2/(\sigma^2 Fr^2) = O(1)$ is required {for the rate of change of the phase parameter $\theta(\bs{r},t)$ of the SVE wave solution Ansatz \eqref{SVE-ansatz} to match the time scale of the density $\rho(\bs{r},t)$ in equation \eqref{HP-WCI-SVE}.}}}
\begin{align}
\epsilon \ll 1
\,,\quad
\frac{\epsilon^2}{\sigma^2 Fr^2} = O(1), 
\quad\hbox{and}\quad
\sigma^2 Fr^2  \ll1
\,.
\label{HP-WCI-SVE}
\end{align}
According to the estimates in \eqref{FrBV-est} there is a range of physical parameters relevant to satellite 
observations in which the SVE approximation applies, for $\sigma^2 Fr^2  \ll1$. 

To continue the investigation of the SVE description of wave-current interactions on the sea surface, we take variations of the action integral \eqref{HP-WCI-SVE} to find the following set of equations,
\begin{align}
\begin{split}
	\delta \bs{\wh{v}}:&\quad 
	\frac{\delta \ell}{\delta \bs{\wh{v}}} = D\rho \Big( \bs{\wh{v}}\cdot d\bs{r} 
	 + {\cal N} d \frac{d\theta}{dt}  \Big) \otimes d^2r
	 \quad\hbox{with}\quad {\cal N}:= \frac{\sigma^2|a|^2}{4\epsilon^2} 
		\,,\\ 
	\delta |a|^2 :&\quad  
	\frac{\delta \ell}{\delta |a|^2}  =  \frac{\sigma^2}{8 Fr^2} D\rho \bigg( 
	\Big(\frac{d\theta}{dt}\Big)^2 - \frac{\rho_{ref}}{\rho} \bigg) 
	= 0 \quad\hbox{at}\quad O\left(\frac{\sigma^2}{\epsilon^2}\right) 
	\\& 
	\Longrightarrow 
	\frac{d\theta}{dt} =: -\, \omega +  \bs{\wh{v}}\cdot \bs{k} = \pm \frac{\sqrt{\rho\rho_{ref}}}{\rho}
	\quad\hbox{with}\quad \omega(\bs{r},t) = -\p_t \theta
	\quad\hbox{and}\quad
	\bs{k}(\bs{r},t) = \nabla_{\bs{r}}\theta 
		\,,\\ 
	\delta \theta :&\quad 
	\frac{\delta \ell}{\delta \theta} = 0 \Longrightarrow
	\partial_t {\cal A} + \text{div} ({\cal A} \bs{\wh{v}}) = 0\,, \quad\hbox{with}\quad 
	{\cal A} := D\rho{\cal N}\frac{d\theta}{dt} 
	\quad\hbox{and}\quad {\cal N}:= \frac{\sigma^2|a|^2}{4 \epsilon^2} 
	\,,\\
	\delta D:&\quad 
	\frac{\delta \ell}{\delta D } 
	= \frac{\rho}{2}|\bs{\wh{v}}|^2  - p
	\,,\\
	\delta\rho :&\quad 
	\frac{\delta \ell}{\delta \rho } 
	= \frac{D}{2}|\bs{\wh{v}}|^2  
	\,,\\
	\delta p:&\quad 
	D-1 = 0 \implies \text{div}_{\bs{r}}\bs{\wh v} =0 
	\,,\ \hbox{ Hence, }\  \partial_t{\cal A}  + \bs{\wh{v}}\cdot\nabla_{\bs{r}}{\cal A} = 0
	\Longrightarrow \partial_t|a|^2  + \bs{\wh{v}}\cdot\nabla_{\bs{r}}|a|^2 = 0
	\,.
\end{split}
\label{var-derivs-SVE}
\end{align}

In the second line of \eqref{var-derivs-SVE} we see that stationarity of the action integral with respect to variations in $|a|^2$ acts as a Lagrange multiplier to impose a constraint which relates the dynamics of the wave phase $\theta$ to the buoyancy. This constraint relation involves the Doppler-shifted frequency of the waves, as shown in the third line of \eqref{var-derivs-SVE}. In combination with conservation of the wave action density and the divergence free condition on the fluid flow velocity $\bs{\wh{v}}$, this constraint relation implies in the last line of \eqref{var-derivs-SVE} that the wave magnitude $|a|^2$ is advected by the fluid flow. Because of the oscillatory nature of the solution Ansatz \eqref{SVE-ansatz}, the sign of the wave phase in $d\theta/dt = \p_t \theta + \bs{\wh{v}}\cdot\nabla_{\bs{r}}\theta$ in the second line above is immaterial. Hence, hereafter, we will choose the positive root for $d\theta/dt=\sqrt{\rho\rho_{ref}}/\rho$. 

From the conservation of wave action density ${\cal A}$ in \eqref{var-derivs-SVE} and the definitions of the advected fluid variables, one finds that $|a|^2$, $D$ and $\rho$ satisfy the following advection relations
   \begin{align}
   \begin{split}    
        (\p_t+\mathcal{L}_{\bs{\wh v}})(D\,d^2r) = 0 &\Longrightarrow \partial_t D + \text{div}_{\bs{r}}(D\bs{\wh{v}}) 
        =0
        \quad\hbox{with}\quad D=1
        \,,\\ 
        (\p_t+\mathcal{L}_{\bs{\wh v}})\rho = 0 &\Longrightarrow 
        \partial_t\rho  + \bs{\wh{v}}\cdot\nabla_{\bs{r}}\rho = 0       
        \,,\\
        (\p_t+\mathcal{L}_{\bs{\wh v}})|a|^2 = 0 &\Longrightarrow 
        \partial_t |a|^2  + \bs{\wh{v}}\cdot\nabla_{\bs{r}}|a|^2 = 0 
        \,,
        \end{split}
        \label{AdvecQuants-SVE}
    \end{align}
where $\mathcal{L}_{\bs{\wh v}}$ denotes the Lie derivative operation along the horizontal velocity vector field, $\bs{\wh v}$. The Lie derivative notation $\mathcal{L}_{\bs{\wh v}}$ provides coordinate-free brevity in proving the following Kelvin circulation theorem for thermal wave-current theory. 
\begin{theorem}[Kelvin-Noether circulation theorem]\label{theoremEP}
The variational equations in \eqref{var-derivs-SVE} imply the following Kelvin circulation theorem
   \begin{align}
   \frac{d}{dt}\oint_{c(\bs{\wh v})}  \Big( \bs{\wh{v}}\cdot d\bs{r} + {\cal N} d \frac{d\theta}{dt}  \Big)  
   = - \oint_{c(\bs{\wh v})} \frac{1}{\rho}dp\,.
   \label{KNthm-SVE}
   \end{align}
\end{theorem}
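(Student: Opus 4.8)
The plan is to reproduce the Euler--Poincar\'e/Kelvin--Noether argument that already produced the circulation law \eqref{KNthm}, now driven by the SVE Lagrangian and the variational derivatives recorded in \eqref{var-derivs-SVE}. First I would read off the momentum one-form directly from the first line of \eqref{var-derivs-SVE},
\begin{align*}
\bs{M} := \frac{1}{D\rho}\frac{\delta\ell}{\delta\bs{\wh v}} = \bs{\wh v}\cdot d\bs{r} + {\cal N}\,d\frac{d\theta}{dt}\,,
\end{align*}
which is precisely the integrand whose loop integral appears on the left of \eqref{KNthm-SVE}. Everything then reduces to computing the transport derivative $(\p_t+\mathcal{L}_{\bs{\wh v}})\bs{M}$ and integrating it around a material loop $c(\bs{\wh v})$.

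Next I would invoke the EP theorem as in the proof of \eqref{KNthm}, here with advected quantities $D\,d^2r$, $\rho$ and $|a|^2$, all obeying the advection laws \eqref{AdvecQuants-SVE}, so that
\begin{align*}
(\p_t+\mathcal{L}_{\bs{\wh v}})\frac{\delta\ell}{\delta\bs{\wh v}} = \frac{\delta\ell}{\delta D}\diamond D + \frac{\delta\ell}{\delta\rho}\diamond\rho + \frac{\delta\ell}{\delta|a|^2}\diamond|a|^2 \,.
\end{align*}
The decisive simplification is that the amplitude term disappears: the $\delta|a|^2$ stationarity condition in the second line of \eqref{var-derivs-SVE} is exactly $\frac{\delta\ell}{\delta|a|^2}=0$, so $|a|^2$ sources no diamond term. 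Dividing through by $D\rho$ and drawing the factor inside the transport operator with \eqref{AdvecQuants-SVE} --- the same manipulation that yielded \eqref{EPeqn-simp1} --- I would obtain $(\p_t+\mathcal{L}_{\bs{\wh v}})\bs{M} = \rho^{-1}\nabla_{\bs r}\tfrac{\delta\ell}{\delta D} - (D\rho)^{-1}\tfrac{\delta\ell}{\delta\rho}\nabla_{\bs r}\rho$.

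I would then substitute the remaining derivatives $\frac{\delta\ell}{\delta D}=\frac{\rho}{2}|\bs{\wh v}|^2-p$ and $\frac{\delta\ell}{\delta\rho}=\frac{D}{2}|\bs{\wh v}|^2$ from \eqref{var-derivs-SVE}, impose $D=1$, and note that the two $\tfrac12|\bs{\wh v}|^2\nabla_{\bs r}\rho$ contributions cancel, leaving the pure-gradient form $(\p_t+\mathcal{L}_{\bs{\wh v}})\bs{M}=d\big(\tfrac12|\bs{\wh v}|^2\big)-\rho^{-1}dp$. Feeding this into the loop-transport identity used in \eqref{EPeqn-Kel}, namely $\tfrac{d}{dt}\oint_{c(\bs{\wh v})}\bs{M}=\oint_{c(\bs{\wh v})}(\p_t+\mathcal{L}_{\bs{\wh v}})\bs{M}$, the exact differential $d\big(\tfrac12|\bs{\wh v}|^2\big)$ integrates to zero around the closed material loop and \eqref{KNthm-SVE} follows immediately.

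The step I expect to require the most care --- and where the SVE case differs conceptually from \eqref{KNthm} --- is confirming that the wave sector injects nothing non-exact into the right-hand side. Two observations carry this. First, the phase $\theta$ is a dynamical field whose own Euler--Lagrange equation is the wave-action conservation law in the third line of \eqref{var-derivs-SVE}; it therefore enters only through the momentum one-form $\bs{M}$ on the left and not as a diamond source on the right. Second, because the wave Lagrangian density is proportional to the dispersion bracket $\big((d\theta/dt)^2-\rho_{ref}/\rho\big)$ that vanishes on the constraint surface, the wave contributions to $\frac{\delta\ell}{\delta D}$ and $\frac{\delta\ell}{\delta\rho}$ must collapse to the fluid expressions quoted in \eqref{var-derivs-SVE}. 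I would check this second point explicitly, since any surviving term proportional to $\nabla_{\bs r}\rho$ would spoil the clean result; once it is verified, the theorem is the non-acceleration statement of Remark \ref{Remark:nonacceleration} realised in the SVE approximation.
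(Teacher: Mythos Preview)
Your proposal is correct and follows essentially the same route as the paper: invoke the Euler--Poincar\'e equation with advected quantities $D$ and $\rho$, divide by $D\rho$, substitute the variational derivatives from \eqref{var-derivs-SVE}, and apply the loop-transport identity so that the exact differential $d(\tfrac12|\bs{\wh v}|^2)$ drops out. Your extra remark that $\delta\ell/\delta|a|^2=0$ kills any would-be diamond contribution from $|a|^2$ is harmless but not strictly needed --- in the paper's setup $|a|^2$ and $\theta$ are dynamical (Euler--Lagrange) fields rather than advected parameters, so they never enter the diamond terms in the first place; the paper also appends the observation \eqref{Wave-Kel} that the wave one-form ${\cal N}\,d(d\theta/dt)$ is itself advected, which sharpens the non-acceleration conclusion but is not required to establish \eqref{KNthm-SVE}.
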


\begin{proof}
The Euler-Poincar\'e (EP) theorem \cite{HMR1998} in this case yields
   \begin{align}
   \begin{split}        
 (\p_t+\mathcal{L}_{\bs{\wh v}})\frac{\delta \ell}{\delta \bs{\wh{v}}} 
 &= 
 \frac{\delta \ell}{\delta D } \diamond D + \frac{\delta \ell}{\delta \rho } \diamond \rho
 :=  D \nabla_{\bs{r}}\frac{\delta \ell}{\delta D } - \frac{\delta \ell}{\delta \rho } \nabla_{\bs{r}} \rho
 \,.
        \end{split}
        \label{EPeqn-SVE1}
    \end{align}
Here, the diamond $(\,\diamond\,)$ operator is defined for a fluid advected quantity $f$ by 
   \begin{align}
   \Big\langle  \frac{\delta \ell}{\delta f } \diamond f \,,\, X \Big\rangle_{\mathfrak{X}}
   =:     \Big\langle  \frac{\delta \ell}{\delta f }  \,,\, - \pounds_X f \Big\rangle_V
          \,.\label{EPeqn-SVE2}
    \end{align}
In \eqref{EPeqn-SVE2}, $X\in \mathfrak{X}(\bb{R}^2)$ is a (smooth) vector field defined on $\bb{R}^2$ and $f\in V$ is a vector space of advected quantities. These advected quantities are the scalar function, $\rho$, and the areal density, $D\,d^2r$. 

Upon using the advection relations for $D$ and $\rho$ in \eqref{AdvecQuants-SVE} and the corresponding variational derivatives in \eqref{var-derivs-SVE}, the EP equation in \eqref{EPeqn-SVE1} simplifies to 
   \begin{align}
   \begin{split}        
 (\p_t+\mathcal{L}_{\bs{\wh v}})&\Big(\frac{1}{D\rho}\frac{\delta \ell}{\delta \bs{\wh{v}}} \Big)
 =   \frac{1}{\rho} \nabla_{\bs{r}}\frac{\delta \ell}{\delta D } - \frac{1}{D\rho}\frac{\delta \ell}{\delta \rho } \nabla_{\bs{r}} \rho
\,. \\   \hbox{Equation \eqref{var-derivs-SVE} then yields} \quad
(\p_t+\mathcal{L}_{\bs{\wh v}})&\Big( \bs{\wh{v}}\cdot d\bs{r} + {\cal N} d \frac{d\theta}{dt}  \Big) 
 =
     - \rho^{-1}dp + d\Big(\frac{1}{2}|\bs{\wh{v}}|^2\Big) 
  \,.
        \end{split}
        \label{EPeqn-SVE}
    \end{align}
Inserting the last relation into the following standard relation for the time derivative of a loop integral then completes the proof of equation \eqref{KNthm-SVE} appearing in the statement of the theorem,
    \begin{align}
     \frac{d}{dt}\oint_{c(\bs{\wh v})}  \Big( \bs{\wh{v}}\cdot d\bs{r} + {\cal N} d \frac{d\theta}{dt}  \Big)
     = \oint_{c(\bs{\wh v})} (\p_t+\mathcal{L}_{\bs{\wh v}}) \Big( \bs{\wh{v}}\cdot d\bs{r} + {\cal N} d \frac{d\theta}{dt}  \Big)
     = \oint_{c(\bs{\wh v})} - \rho^{-1}dp + d\Big(\frac{1}{2}|\bs{\wh{v}}|^2\Big)
     \,.
        \label{EPeqn-Kel}
    \end{align}
Note, however, that equations \eqref{var-derivs-SVE} imply the following combination of advected quantities,
    \begin{align}
    (\p_t+\mathcal{L}_{\bs{\wh v}})\left( {\cal N} d \frac{d\theta}{dt}\right) 
    =
    \frac{\sigma^2}{4 Fr^2} (\p_t+\mathcal{L}_{\bs{\wh v}})\left( |a|^2 d \sqrt{\frac{\rho_{ref}}{\rho}} \right) = 0
    \,.
    \label{Wave-Kel}
    \end{align}
Consequently, the wave-momentum 1-form ${\cal N} d (\frac{d\theta}{dt})$ is advected by the fluid flow and the Kelvin circulation theorem in equation \eqref{EPeqn-Kel} reduces to the standard circulation theorem for the 2D Euler fluid equations. 
\end{proof}

\begin{remark}[Separation of wave and current motion in the SVE approximation]\label{Remark:nonacceleration SVE}
    The decoupling of the Kelvin-Noether circulation theorem into its wave and current components for the SVE approximation is inherited from the un-approximated model. When modifications to the un-approximated model which removes this property are added, one would expect the new SVE approximation to lose the non-acceleration result. 
\end{remark}
\color{black}
\begin{remark}
Equation \eqref{Wave-Kel} implies advection of the 1-form $|a|^2 d \rho$, which in turn implies advection of the Jacobian $J(|a|^2,\rho)$. Since the fluid flow is area preserving, $\textrm{div}\bs{\wh{v}}=0$,  the following 2-form will also be advected,  
\begin{align}
    \big(\partial_t   + \bs{\wh{v}}\cdot\nabla_{\bs{r}}\big)\Big(d|a|^2\wedge d\rho\Big) = 0
\,.
\label{Jac-advec}
\end{align}
Thus, the divergence-free flow of $\bs{\wh{v}}$ preserves the area element $d|a|^2\wedge d\rho$. This means that if the gradients $\nabla|a|^2$ and $\nabla\rho$ are not aligned initially, then they will remain so. It also means that equilibrium solutions of \eqref{Jac-advec} will be symplectic manifolds \cite{Holm-GM1text-2011}. 
\end{remark}

After deriving these equations, one may finally evaluate the constraint $D=1$ imposed by the variation in pressure $p$ to 
obtain further simplifications.

\subsection[Thermal potential vorticity dynamics with SVE]{Thermal potential vorticity dynamics with SVE on a free surface}
The momentum map arising from the variations of the action in \eqref{var-derivs-SVE} is given by 
\begin{align}
\begin{split}
	\frac1D \frac{\delta \ell}{\delta \bs{\wh{v}}} &= \rho \Big( \bs{\wh{v}}\cdot d\bs{r} 
	 + {\cal N} d \frac{d\theta}{dt}  \Big) 
	 \quad\hbox{with}\quad {\cal N}:= \frac{\sigma^2N^2|a|^2}{4} =: \Gamma |a|^2
	 \quad\hbox{and}\quad \frac{d\theta}{dt} =  \sqrt{\frac{\rho_{ref}}{\rho}}
	 \\ \hbox{so}\quad
	 \frac1D \frac{\delta \ell}{\delta \bs{\wh{v}}} 
	 &= \rho \Big( \bs{\wh{v}}\cdot d\bs{r} + \Gamma |a|^2 d (\sqrt{\rho\rho_{ref}}/\rho) \Big) 
\end{split} \label{Q-def-SVE}
\end{align}
According to the Euler-Poincar\'e equation \eqref{EPeqn-SVE}, the dynamics of the fluid and wave components of the 1-form in \eqref{Q-def-SVE} \emph{separates} into the following equations,
\begin{align}
\begin{split}
(\p_t+\mathcal{L}_{\bs{\wh v}}) \Big( \rho \big( \bs{\wh{v}}\cdot d\bs{r} \big)\Big)
 &=
     - dp + \frac{\rho}{2} d\big(|\bs{\wh{v}}|^2\big) 
 \\
 (\p_t+\mathcal{L}_{\bs{\wh v}}) \big( |a|^2 d \sqrt{\rho\rho_{ref}} \big)
 &= 0
  \,.
\end{split}
\label{momap-SVE}
\end{align}
This means that the mass-weighted thermal potential vorticity (TPV) dynamics also separates into the following fluid and wave components, $Q=Q_F + Q_W$, given by
\begin{align}
\begin{split}
 Q \, d^2r &:= d  \bigg(\rho \Big( \bs{\wh{v}}\cdot d\bs{r} 
	 + \Gamma |a|^2 d \sqrt{\frac{\rho_{ref}}{\rho}} \Big)\bigg)
\\&=
\Big({\rm div}(\rho\nabla\psi) - \Gamma J\Big(|a|^2, \sqrt{\rho\rho_{ref}} \Big)\,d^2r
\quad\hbox{when}\quad
\bs{\wh v} =\nabla^\perp\psi 
\quad\hbox{for}\quad D=1
\,,\\&=
 Q_F \, d^2r + Q_W \, d^2r
\,,\\
\hbox{with}\quad 
 Q_F &:= {\rm div}(\rho\nabla\psi)
\quad \hbox{and}\quad 
 Q_W := \Gamma J\big( \sqrt{\rho\rho_{ref}} \,,\, |a|^2 \big)
\,.
\end{split}
\label{PV-def-SVE}
\end{align}
Then, again, the differentials of the separate equations in \eqref{momap-SVE} yield the `non-acceleration' result, 
\begin{align}
\begin{split}
(\p_t+\mathcal{L}_{\bs{\wh v}})  (Q_F \, d^2r)
 &=
      \frac{1}{2}d\rho\wedge d|\bs{\wh{v}}|^2 
  = \frac{1}{2} J\big(\rho,|\nabla \psi|^2\big)\, d^2r
  \,,
\\
(\p_t+\mathcal{L}_{\bs{\wh v}})  (Q_W \, d^2r)
	 &= 0 
\end{split}
\label{Q-PV-SVE-split}
\end{align}
Equivalently, in coordinates one has 
\begin{align}
\begin{split}
\p_t Q_F + \bs{\wh v}\cdot \nabla Q_F &= \frac{1}{2} J\big(\rho,|\nabla \psi|^2\big)
\,,\\
\p_t Q_W + \bs{\wh v}\cdot \nabla Q_W &= 0
\,,\\
\hbox{with}\quad 
 Q_F := {\rm div}(\rho\nabla\psi)
\quad \hbox{and}&\quad 
 Q_W := \Gamma J\big( \sqrt{\rho\rho_{ref}} \,,\, |a|^2 \big)
\,,\\
\partial_t \rho  + \bs{\wh{v}}\cdot\nabla_{\bs{r}} \rho &= 0
\quad\hbox{and}\quad
\Gamma = \frac{\sigma^2}{4 Fr^2} = O(1)
\,,\\
\partial_t|a|^2  + \bs{\wh{v}}\cdot\nabla_{\bs{r}}|a|^2 &= 0
\,,\\
\partial_t \theta  + \bs{\wh{v}}\cdot\nabla_{\bs{r}} \theta &=  \frac{\sqrt{\rho\rho_{ref}}}{\rho}
\,.\end{split}
\label{Q-PV-SVE-eqn}
\end{align}
The operator $({\rm div}\rho\nabla)$ is invertible, so long as $\rho$ is a differentiable positive function, which can be ensured by requiring that this condition holds initially, since $\rho$ is advected. Consequently, the stream function $\psi$ is related to the other fluid variables by 
\begin{align}
\psi := ({\rm div}\rho\nabla)^{-1}Q_F 
\,.\label{psi-def}
\end{align}
The dynamics of the equation set \eqref{Q-PV-SVE-eqn} explains why the  various physical components of the flow coordinate their movements, as seen in satellite observations in figure \ref{fig:coherence}. In particular, the motion of buoyancy $\rho$ and squared wave amplitude $|a|^2$ are coordinated with each other through the advection of the momentum 1-form $|a|^2 d\rho$ and the area 2-form $d|a|^2\wedge d\rho$. Likewise the the motion of the fluid potential vorticity $Q_F$ and the mass density  $\rho$ are coordinated with each other through the mass-weighted definition of the stream function in \eqref{psi-def}. These considerations emphasise again the importance of horizontal buoyancy gradients in sea surface dynamics.

\section{Numerical implementation}\label{SimSpecs}
Our implementation of the C$\circ$M equations \eqref{PV-sys-Bdyn} and the C$\circ$M equations in the SVE approximation \eqref{Q-PV-SVE-eqn} used the finite element method (FEM) for the spatial variables. The FEM algorithm we used is based on the algorithm formulated in \cite{HLP2021} and is implemented using the Firedrake \footnote{\url{https://firedrakeproject.org/index.html}} software. In particular, for \eqref{PV-sys-Bdyn} we approximated the fluid potential vorticity $Q_F$, buoyancy $\rho$, wave elevation $\zeta$ and bouyancy weighted wave vertical velocity $\tilde{w}$ using a first order discrete Galerkin finite element space. Similarly, for \eqref{Q-PV-SVE-eqn}, we approximated $Q_F$, $\rho$, square of the wave amplitude $|a|^2$ and wave phase $\theta$ using a first order discrete Galerkin finite element space. The stream function $\psi$ for both models was approximated by using a first order continuous Galerkin finite element space. For the time integration, we used the third order strong stability preserving Runge Kutta method \cite{Got05}. 

Figures \ref{fig:snapshot wcifs} and \ref{fig:snapshot full wcifs} present snapshots of high resolution runs of the C$\circ$M equations and the C$\circ$M equations in the SVE approximation. These simulations were run with the following parameters. The domain is $[0,1]^2$ at a resolution of $512^2$. The boundary conditions are periodic in the $x$ direction, and homogeneous Dirichlet for $\psi$ in the $y$ direction. To see the effects of the waves on the currents, the procedure was divided into two stages for both set of equations. The first stage was performed without wave activity for $T_{spin} = 100$ time units starting from the following initial conditions
\begin{align}
\begin{split}
    Q_F(x,y, 0) & = \sin(8\pi x)\sin(8\pi y) + 0.4\cos(6\pi x)\cos(6\pi y) + 0.3\cos(10\pi x)\cos(4\pi y) +\\ & \qquad 0.02\sin(2\pi y) + 0.02\sin(2\pi x)\,,\\
    \rho(x,y, 0)& = 1 + 0.2\sin(2\pi x)\sin(2\pi y)\quad\hbox{and}\quad {\rho_{ref}=1}\,.
\end{split}
\end{align}
The purpose of the first stage was to allow the system to spin up to a statistically steady state without any wave activity. The PV and buoyancy variables at the end of the initial spin-up period are denoted as $Q_{spin}(x,y) = Q_F(x,y,T_{spin})$ and $\rho_{spin}(x,y) = \rho(x,y,T_{spin})$. Figures of these variables are shown in figure \ref{fig:current spinup state}.
\begin{figure}[h!]
    \centering
    \begin{subfigure}[b]{0.45\textwidth}
		\centering
		\includegraphics[width=\textwidth, height=\textwidth]{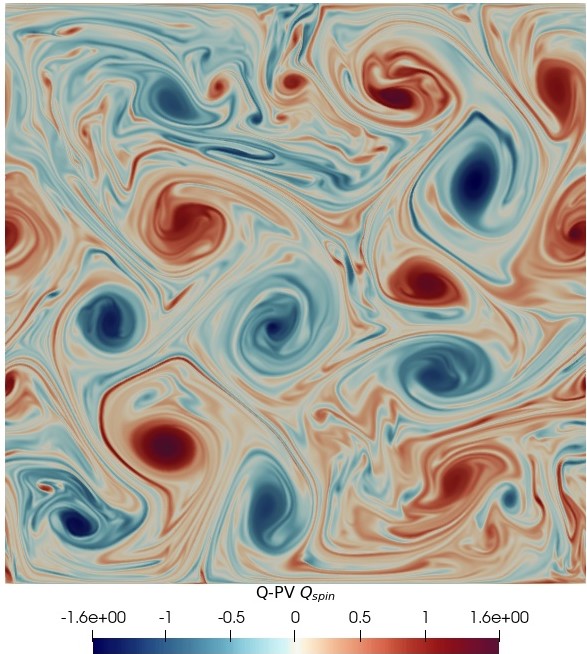}
	\end{subfigure}
	\begin{subfigure}[b]{0.45\textwidth}
		\centering
		\includegraphics[width=\textwidth, height=\textwidth]{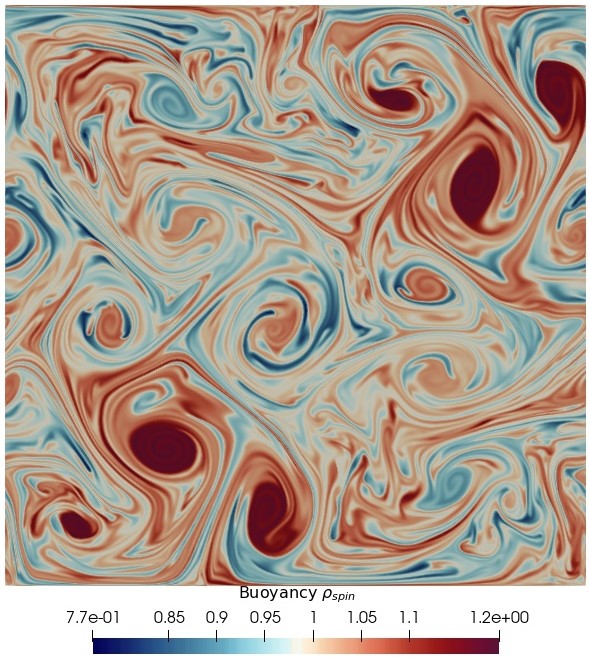}
	\end{subfigure}
	\caption{These figures show the results of the first stage of the simulation in which only fluid motion is present and the wave degrees of freedom are absent. The panels show fluid potential vorticity $Q_F$ (left) and buoyancy $\rho$ (right). The fluid state obtained from the first stage was used as the initial condition for the second stage simulations in which wave variables were included. These distributions of fluid properties show strong spatial coherence. The coordination of wave and fluid properties that emerges in the second stage of the simulations shown in Figure \ref{fig:snapshot wcifs} and \ref{fig:snapshot full wcifs} arises from the interaction between the wave and current components of the flow which is mediated by the buoyancy gradient.}
	\label{fig:current spinup state}
\end{figure}
In the second stage, the full simulations including the wave variables were run with the initial conditions for the flow variables being the state achieved at the end of the first stage. To start the second stage for \eqref{PV-sys-Bdyn}, wave variables were introduced with the following initial conditions
\begin{align}
    \begin{split}
        \zeta(x,y,0) & = \sin(8\pi x)\sin(8\pi y) + 0.4\cos(6\pi x)\cos(6\pi y) + 0.3\cos(10\pi x)\cos(4\pi y) +\\ 
        & \qquad 0.02\sin(2\pi y) + 0.02\sin(2\pi x)\,,\\
        \tilde{w}(x,y,0) &= 0\,,\quad Q_F(x,y,0) = Q_{spin}(x,y)\,,\quad \rho(x,y,0) = \rho_{spin}(x,y)\,,\\
        \sigma^2 Fr^2 &= 10^{-2}\,.
    \end{split}
\end{align}
To start the second stage for \eqref{Q-PV-SVE-eqn}, wave variables were introduced with the following initial conditions
\begin{align}
\begin{split}
    |a|^2(x,y,0) & = \big(\sin(8\pi x)\sin(8\pi y) + 0.4\cos(6\pi x)\cos(6\pi y) + 0.3\cos(10\pi x)\cos(4\pi y) +\\ 
    & \qquad 0.02\sin(2\pi y) + 0.02\sin(2\pi x)\big)^2\,,\\
    \theta(x,y,0) &= 0\,,\quad Q_F(x,y,0) = Q_{spin}(x,y)\,,\quad \rho(x,y,0) = \rho_{spin}(x,y)\,.
\end{split}
\end{align}
\begin{remark}\label{theta-rho}
Importantly, the wave phase $\theta$ in the second stage was set initially to zero. Thereafter, the wave phase $\theta$ increased linearly in time in proportion to the advected quantity $\sqrt{\rho\rho_{ref}}/\rho$ following each flow line, as implied by the last equation in \eqref{Q-PV-SVE-eqn}. 
\end{remark}


\section{Conclusion and Outlook}
This paper models the effects of thermal fronts on the dynamics of the ocean's waves and currents. It introduces and simulates two models of thermal wave-current dynamics on a free surface. The original C$\circ$M model is derived from Hamilton's principle via the composition of two maps which represent the horizontal and vertical motion respectively. The second, a slowly varying encelope (SVE) model, is introduced via the standard WKB approximation which takes advantage of large separation of the space-time scales between the slow horizontal currents and fast vertical oscillations. In particular, the second model introduces the WKB solution Ansatz into Hamilton's principle, whereupon the time integral averages over the phases of the rapid oscillations that are out of resonance with the slowly varying envelope. Model runs of both models are presented in which the buoyancy mediates the dynamics of the currents and waves, as seen in Figures \ref{fig:snapshot wcifs} and \ref{fig:snapshot full wcifs}. These simulations also validate the use of the WKB approximation for two reasons. First, the resolved small scale wave features of the original C$\circ$M model lie primarily within the envelope defined by the SVE approximate model. This means that the dynamics of the spatial features of the SVE approximate model are consistent with those of the original C$\circ$M model, although the resolved space and time scales differ. Secondly, requiring that $\epsilon^2/(Fr^2\sigma^2)= O(1)$ ensures that the time scale for the wave envelope dynamics matches that for the fluid motion.  

Nonetheless, the two models introduced here merit further study in several directions. For example, it remains to: (1) quantify the correlations observed visually; (2) determine their rate of formation; and (3) parameterise the model for comparison and analysis of the satellite data on which their derivations were based. Furthermore, the models discussed here involve only variables that are evaluated on the free surface and therefore they neglect bathymetry. A scientific challenge persists in understanding regions of the ocean where bathymetry has profound effects on the observable surface dynamics, such as in the Lofoten vortex \cite{VKL2015}. This is a multiscale issue that might be addressed by including mesoscale modulations of the sub-mesoscale models derived here. One candidate for providing the mesoscale modulations would the thermal quasi-geostrophic (TQG) model in which bathymetry has recently been included \cite{HLP2021}. 

{The currents are modelled here by the two dimensional incompressible Euler equations, as seen in equations \eqref{HP-WCI-A-Eul-Lag} and \eqref{Eul-eqn}. Incompressibility is a reasonable assumption in some regions of the ocean, for example when the quasigeostrophic approximation is valid. There are regions in the upper ocean where other equations are more suitable for modelling currents, and the development and investigation of such two dimensional models is an open problem which warrants further consideration.}

As mentioned in Remark \ref{Remark:nonacceleration}, the wave component of the model presented here does not create circulation in the currents. The instabilities present in satellite simulations indicate that additional modelling is needed to fully capture this effect. Future work will investigate approaches for modelling these instabilities.

Many other questions remain about wave-current interaction. The full extent of submesoscale ocean dynamics is by no means adequately described by existing models. For example, we have little understanding of the formation and dynamics of various sea-surface phenomena, including the so-called `spirals on the sea' \cite{MunkArmiFZ2000}. Other questions are emerging because the ocean has absorbed in excess of 90\% of the heat present in the earth system as a result of human activity during the post-industrial era \cite{IPCC2019}. The absorption of heat from the warming atmosphere is ongoing and it is forecast to become more dramatic. This absorption has resulted in `marine heat waves', which are predicted to increase in frequency and severity. These changes to the upper ocean, where most of this heat is stored, could have a profound effect on the dynamical landscape of our oceans. These effects may, in turn, influence our weather and climate systems. Over the millennia, the ocean has approached statistical equilibrium under its current forcing conditions. Using modelling terminology, one says the ocean is well `spun-up'. However, the continued warming of the ocean is likely to influence the number and intensity of thermal fronts. One hopes that mathematical models will provide a useful framework for estimating some of the potential impacts of these thermal fronts on atmospheric effects, as well.


\subsection*{Acknowledgements} 
We are grateful to our friends and colleagues who have generously offered their time, thoughts, and encouragement in the course of this work during the time of COVID-19. Thanks to A. Arnold, B. Chapron, D. Crisan, E. Luesink,  A. Mashayekhi, and J. C. McWilliams for their thoughtful comments and discussions. Particular thanks to B. Chapron, {for extensive discussions of satellite oceanography and for providing the satellite data in  figures \ref {fig:snapshot Chapron} and \ref{fig:coherence}. We also thank B. Fox-Kemper for constructive discussions of modelling approaches in physical oceanography. These discussions helped us clarify the distinction between the present C$\circ$M modelling approach and the classical balance equation approach}. The authors are grateful for partial support, as follows. 
DH for European Research Council (ERC) Synergy grant STUOD - DLV-856408;
RH for the EPSRC scholarship (Grant No. EP/R513052/1); and 
OS for the EPSRC Centre for Doctoral Training in the Mathematics of Planet Earth (Grant No. EP/L016613/1).

\end{document}